\def\BibTeX{{\rm B\kern-.05em{\sc i\kern-.025em b}\kern-.08em
    T\kern-.1667em\lower.7ex\hbox{E}\kern-.125emX}}
\newtheorem{lemma}{Lemma}
\newtheorem{remark}{Remark}
\begin{document}
\captionsetup[figure]{labelsep=period,singlelinecheck=off,font=small}
\captionsetup[table]{
	labelsep = newline,
	singlelinecheck=true,
	font=small,
}
\title{Sensing Security in Near-Field ISAC: Exploiting Scatterers for Eavesdropper Deception}
\author{Jiangong Chen,
	Xia Lei,
	Kaitao Meng, \IEEEmembership{Member,~IEEE},
	Kawon Han, \IEEEmembership{Member,~IEEE},
	Yuchen Zhang, \IEEEmembership{Member,~IEEE},
	Christos Masouros, \IEEEmembership{Fellow,~IEEE}, and
	Athina P. Petropulu, \IEEEmembership{Fellow,~IEEE}
	\vspace{-10mm}
	
	\thanks{
		
		J. Chen and X. Lei are with the National Key Laboratory of Wireless Communications, University of Electronic Science and Technology of China, Chengdu, China (e-mail: jg\_chen@std.uestc.edu.cn, leixia@uestc.edu.cn).
		
		K. Meng is with the Department of Electrical and Electronic Engineering, University of Manchester, Manchester, UK (email: kaitao.meng@manchester.ac.uk).
		
		Y. Zhang is with the Electrical and Computer Engineering Program, Computer, Electrical and Mathematical Sciences and Engineering (CEMSE), King Abdullah University of Science and Technology (KAUST), Thuwal 23955-6900, Kingdom of Saudi Arabia (e-mail: yuchen.zhang@kaust.edu.sa).
		
		K. Han and C. Masouros are with the Department of Electronic and Electrical Engineering, University College London, London, UK (e-mail: \{c.masouros, kawon.han\}@ucl.ac.uk).
		
		A. Petropulu is with the Department of Electrical and Computer Engineering, Rutgers University, Piscataway, NJ 08901 USA (e-mail: athinap@soe.rutgers.edu).
		
	}
	
}

\maketitle
\begin{abstract}

In this paper, we explore sensing security in near-field (NF) integrated sensing and communication (ISAC) scenarios, by exploiting known scatterer in the sensing scene. We propose a location deception (LD) scheme where scatterers are deliberately illuminated with probing power that is higher than that directed towards targets of interest, with the goal of deceiving potential eavesdroppers (Eves) with sensing capability into misidentifying scatterers as targets.
While the known scatterers can be removed at the legitimate sensing receiver, our LD approach causes Eves to misdetect targets. Notably, this deception is achieved without requiring any prior information about the Eves’ characteristics or locations. To strike a flexible three-way tradeoff among communication, sensing and sensing-security performance, sum rate and power allocated to scatterers are weighted and maximized under certain legitimate radar signal-to-interference-plus-noise ratio (SINR). We employ the fractional programming (FP) framework and semidefinite relaxation (SDR) to solve this problem. To evaluate the security of the proposed LD scheme, the Cramér-Rao Bound (CRB) and mean squared error (MSE) metrics are employed.  Additionally, we introduce the Kullback-Leibler Divergence (KLD) gap between targets and scatterers at Eve to quantify the impact of the proposed LD framework on Eve's sensing performance from an information-theoretical perspective. Simulation results demonstrate that the proposed LD scheme can flexibly adjust the beamforming strategy according to performance requirements, thereby achieving the desired three-way tradeoff. Particularly in terms of sensing security, the proposed scheme significantly enhances the clutter signal strength at Eve’s side, leading to confusion or even missed detection of the actual target.

\begin{IEEEkeywords}
Integrated sensing and communication, physical layer security, location deception, near-field,  beamforming.
\end{IEEEkeywords}
\end{abstract}

\IEEEpeerreviewmaketitle
\section{Introduction}
Integrated sensing and communication (ISAC) technology has been recognized as one of the six core scenarios of international mobile telecommunications (IMT)-2030 \cite{IMT2030}, garnering widespread attention from both academia and industry. By deeply integrating sensing and communication functions, ISAC enables the shared use of spectrum, hardware, and computational resources \cite{ISCABack}. This integration not only enhances overall system efficiency but also provides novel technological support for future intelligent applications, paving the way for revolutionary advancements in smart cities, intelligent transportation, industrial Internet, etc.  {\color{black}Although ISAC offers tremendous opportunities, addressing the accompanying security challenges before realizing these potentials is crucial to avoid data and geolocation privacy leakage.} Specifically, security concerns primarily focus on the confidentiality of communication signals in conventional communication systems \cite{PLS,PLSISACIRS}, whereas ISAC systems introduce entirely new vulnerabilities relating to geolocation privacy. 
For instance, {\color{black}sensing-capable eavesdroppers (Eves) may track the sensitive location information of transceivers and targets.} More concerningly, some advanced Eves might even exploit sensing results to recover the legitimate propagation channels, further enhancing their eavesdropping and malicious sensing capabilities, potentially leading to a vicious cycle. 
\vspace{-3mm}
\subsection{Related works}
\subsubsection{Communication security}
Early stage ISAC physical-layer security (PLS) techniques primarily focused on physical layer communication security and radar sensing performance \cite{ComSec1,ComSec2,ComSec3,ComSec4,ComSec5}, with Eves typically assumed to be the radar's sensing targets. Using radar signal-to-interference-plus-noise ratio (SINR) and secrecy rate as performance metrics, the authors of \cite{ComSec1} established three types of optimization problems, i.e., maximizing the security rate, maximizing radar SINR, and minimizing transmit power. However, \cite{ComSec1} considered perfect knowledge of Eve's channel state information (CSI), which is not a practical scenario. Considering the imperfect CSI of both the legitimate user and Eve (target) in \cite{ComSec2}, robust beamforming via fractional programming (FP), semidefinite relaxation (SDR), and S-procedure was conducted to minimize the signal-to-noise ratio (SNR) of the Eve under given multiple-input-multiple-output (MIMO) radar beampattern, communication SINR, and power budget constraints. To further improve the performance, directional modulation was utilized in \cite{ComSec3} to build constructive and destructive interference toward the legitimate user and Eve respectively. Furthermore, the details of Eve detection were given in \cite{ComSec4} through Capon and approximate maximum likelihood techniques. With the estimated directions of potential Eves, a weighted optimization problem was conducted to maximize the secrecy rate and simultaneously minimize the  Cramér-Rao bound (CRB) ratio between targets and Eves. In contrast to the above single-snapshot case, the authors in \cite{ComSec5} proposed a robust and secure ISAC scheme over a sequence of snapshots, where the transmit beamforming vector and artificial noise (AN) covariance matrix are jointly optimized under average achievable rate and  information leakage constraints in each time slot.

\subsubsection{Sensing security}
The aforementioned works have primarily focused on securing the communication aspect of transmission while overlooking the fact that Eves may also possess sensing capabilities. In such scenarios, the location privacy of legitimate transceivers and targets becomes compromised. Furthermore, Eve may leverage the eavesdropping location information to recover the legitimate statistical channel. By employing channel equalization and left-multiplying legitimate channel coefficients, Eve can receive signals as legitimate users, potentially breaking conventional PLS techniques such as AN \cite{ANE}. 

Related to our scope, but with the protection of transmitter location in mind, the authors of \cite{SenSec1,SenSec2} are the first to consider privacy protection of radar locations in scenarios of radar vs communications spectrum co-existence. In particular, it was shown that the precoder used by the radar to avoid interfering with the communication receiver can be exploited by Eve to infer the radar location. Subsequently, a novel precoder optimization problem was proposed in \cite{SenSec3} using gradient enforcement to achieve a trade off between the radar interference power and radar privacy. In order to achieve superior performance, the authors of \cite{SenSec4} considered neural-network-based optimization and  imaginary communication user. 

In addition to the aforementioned work on protecting radar privacy through precoder design, some other studies have attempted to introduce model mismatches at Eve to disrupt its sensing performance. For example, a CSI fuzzer was added to the WiFi systems to generate an artificial response which is known at the authorized receiver but prevents CSI-based sensing at an unauthorized receiver \cite{SenSec5}. Also focusing in the sensing privacy protection in WiFi systems, Aegis was introduced in \cite{SenSec6} as an innovative RF sensing shield that obfuscates human motion information for unauthorized receivers by distorting amplitude, delay, and Doppler shift via some hardware (a combination of amplifiers, a fan, and a directional antenna), while preserving legitimate sensing and communication performance. Similarly, \cite{SenSec7} introduced millimeter-filter (mmFilter), an application-oriented privacy filtering framework for  millimeter-wave (mmWave)  radar, which employs a signal reversion methodology to selectively perturb sensitive data while preserving permitted sensing functions. Tailored techniques and experiments verified that targeted low-level data modifications can effectively block unauthorized sensing without disrupting overall radar performance. {\color{black}Besides the sensing security}, to prevent unauthorized localization at the base station, a novel beamforming scheme was proposed in \cite{SenSec8,SenSecMa1,SenSecMa2} to nullify the line-of-sight (LoS) uplink signals and erase the angle-of-departure (AoD) information in non-LoS (NLOS) uplink signals, while optimizing the power allocation to maximize communication data rate. Numerical results demonstrate that this method achieves perfect location privacy with superior rate performance compared to zero-forcing beamforming, especially when the UE is equipped with a sufficient number of antennas. Inspired by these works, \cite{SenSec9} presented two pilot signal manipulation techniques, AN and artificial multipath, to protect user location privacy in time-difference-of-arrival (TDOA)-based localization systems by inducing model mismatch at unauthorized nodes via artificial multipath, significantly degrading their localization accuracy while minimally affecting legitimate localization.

Recently, sensing-capable Eves are considered in ISAC systems. Specifically, radar mutual information (RMI) was taken as the sensing metric in \cite{SenSec10}, where the RMI of the legitimate radar was maximized under the constraints of RMI for Eve and communication quality. Instead of RMI, detection probability was derived and optimized in \cite{SenSec11} for the both communication and sensing security in a cell-free ISAC system.  In addition to evaluating Eve's target detection performance, \cite{SenSec12} also employed the CRB to quantify its estimation performance and proposed optimization problems based on maximizing the missed detection probability and maximizing the CRB respectively to suppress malicious sensing.

\subsection{Motivations and Contributions}

{\color{black} Although PLS techniques have been widely studied in ISAC systems, most research focused exclusively on communication security, or on sensing security by protecting the location privacy of the transceivers. However, there is limited research on safeguarding the sensing security of targets' locations.}  In scenarios such as unmanned aerial vehicle (UAV) communications and vehicle-to-everything (V2X) networks, protecting privacy information about radar targets, including characteristic data and movement trajectories, is of significant importance for ensuring security and interests. Moreover, most research assumes that Eve’s CSI is either known or can be sensed by the legitimate transmitter and constructs optimizations based on Eve’s communication and sensing quality of service (QoS), {\color{black}which is not practical.} In reality, Eves are likely to remain silent and undetectable (e.g., staying far away from the legitimate radar, or using stealth materials \cite{Material}). Therefore, developing effective Eve-agnostic transmission methods under such scenarios is particularly important.

Furthermore, most existing studies consider a far-field model and  focus on a single dimension, primarily on the angle domain, for both communication and sensing. {\color{black}However, the deployment of extremely high frequency and extremely large antenna array (ELAA) inevitably leads to near-field (NF) effect in future ISAC systems. Different from the far-field model, NF channel, characterized by its unique spherical wave propagation, makes the channel dependent on both angle and distance \cite{NFSurvey1,NFSurvey2}, thereby enabling both wireless communication and sensing \cite{NFISAC,NFISAC2} across angle and distance dimensions. Some studies have already recognized the potential of leveraging the additional dimension of distance for position-based secure communications \cite{NFPLS} in NF scenarios. However, currently, no studies to date have considered the protection of both angle and distance privacy in NF ISAC systems.}


{\color{black}Against this background, we propose a location deception (LD) beamforming framework for NF ISAC systems to ensure both legitimate communication and sensing services while preventing Eve from successfully sensing the targets. }The main contributions are summarized as follows
 \begin{itemize}
 \item We introduce an NF-ISAC system for joint estimation of angle and distance of one or more targets in the presence of an unobservable Eve. To guarantee the sensing security, the LD concept is introduced to generate clutter interference. Different from some intelligent reflecting surface (IRS)/reconfigurable intelligent surface (RIS)-assisted ISAC systems \cite{ISACRIS1}, the signal reflection from the target is uncontrollable. Therefore, a transmitter side beamforming optimization framework is proposed to realize LD by optimizing the weighted sum of communication rate and power at the scatterers\footnote{Here, the known scatterers may be those that have already been detected by legitimate radars in the environment, or they could be artificially introduced.} under specific legitimate sensing SINR constraints, thereby posing sensing interference at Eve's side through clutter.

 \item  We observe that the rank of the receiver's covariance matrix is determined by the smallest value among the number of communication signal streams, the number of radar probing signal streams, and the number of reflectors (including target and scatterer sources). We utilize this observation to introduce a rank constraint on the radar signal covariance matrix that minimizes the rank of the covariance matrix of the signals received by Eve while ensuring the rank of the covariance matrix of the signal received by the legitimate radar is sufficient for target detection.
 This constraint significantly reduces the number of reflectors that Eve can estimate when the spatial smoothing (SS) technique is not employed, thereby limiting its ability to detect targets with lower energy levels. This non-convex constraint is equivalently transformed using the Courant-Fischer theorem and relaxed through the design of eigenvalues. Additionally, we apply fractional programming (FP) and semidefinite relaxation (SDR) to relax both the objective function and other constraints, thereby converting the problem into a solvable convex form.

 \item  To evaluate the security of the proposed LD scheme, we analyze the Cramér-Rao bounds (CRBs) of both legitimate radar and Eve to assess the lower bounds of their sensing performance. The CRB is further validated by comparing it with the mean squared error (MSE) performance of three classical estimators, i.e., the Capon estimator, the multiple signal classification (MUSIC) estimator, and the maximum likelihood estimator (MLE). {\color{black}Although Eve can achieve low CRB level, the MSE is significantly higher than the CRB due to the interference caused by the proposed LD scheme even when Eve employs SS.} Therefore, instead of the CRB, we introduce the Kullback-Leibler Divergence (KLD) gap between targets and scatterers at Eve to quantify the impact  of the proposed LD framework on Eve's sensing performance from an information-theoretical perspective. {\color{black}With the aid of the proposed LD scheme, Eve's KLD gap significantly decreases and exhibits negative growth with increasing SNR, indicating that Eve achieves better sensing performance at scatterers than at targets.}
 \end{itemize}

\subsection{Notations}
In this paper, we use lower-case letters, lower-case bold letters, and capital bold letters to denote scalars, vectors and matrices respectively. The operators for vectorization, transpose, conjugate, conjugate transpose, inverse, and Moore–Penrose inverse are denoted by  ${\rm vec}\left(\cdot\right)$, $\left(\cdot\right)^{\text{T}}$, $\left(\cdot\right)^{\text{H}}$, $\left(\cdot\right)^{\text{*}}$, $\left(\cdot\right)^{-1}$, and $\left(\cdot\right)^{\dag}$  respectively. $\text{Tr}\left(\mathbf{A}\right)$ and $\text{Rank}\left(\mathbf{A}\right)$ stand for the trace and rank of matrix $\mathbf{A}$. The $i$th element of vector $\mathbf{a}$ is $[\mathbf{a}]_i$, the $i$th column of matrix $\mathbf{A}$ is $\left[\mathbf{A}\right]_i$, and the $(i,j)$th element of matrix $\mathbf{A}$ is $[\mathbf{A}]_{i,j}$. $|a|$, $\|\mathbf{a} \|_2$, $\|\mathbf{A} \|_2$, and $\|\mathbf{A} \|_F$ respectively represent the modulus of scalar $a$, $\ell$-2 norm of vector $\mathbf{a}$, induced 2-norm of matrix $\mathbf{A}$, and Frobenius norm of matrix $\mathbf{A}$. {\color{black}$\mathbb{H}^+$ denotes the Hermitian semipositive definite matrix set. The union of sets $\mathcal{A}$ and $\mathcal{B}$ is represented as $\mathbf{A}\cup \mathcal{B}$. The set $\mathcal{A}$ with the element $a$ removed is represented as $\mathcal{A} \setminus a$.} The operations for extracting the imaginary part and real part of a complex variable are denoted by $\Im\left(\cdot\right)$ and $\Re\left(\cdot\right)$ respectively. $\otimes$ represents the Kronecker product. Moreover, $j = \sqrt{-1}$ is the imaginary unit. $\mathbf{I}_N$ denotes the $N$-dimensional unit array. Considering an optimization problem with respect to $x$, the optimal solution is denoted by $x^{\star}$.

\begin{figure}[ht]
	\centering
	\includegraphics[scale=0.9]{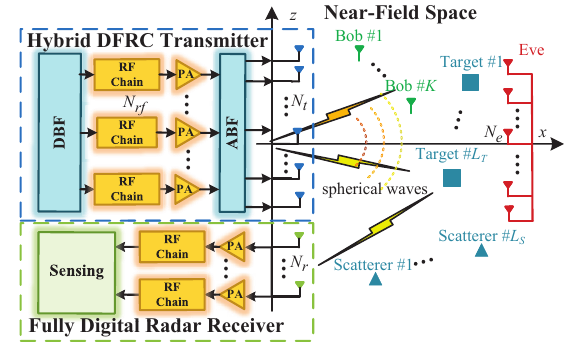}
	\caption{System model of the proposed LD scheme.}
	\label{SystemModel}
\end{figure}

\section{System Model}
\subsection{System and Channel Model}
As illustrated in Fig. \ref{SystemModel}, we consider a downlink multi-user (MU) multiple-input single-output (MISO) NF ISAC system comprising a dual function radar and communication (DFRC) transmitter (Alice), $K_c$ single-antenna communication receivers (Bobs), $L$ reflectors (including $L_T$ targets and $L_S$ scatterers), and an Eve with $N_e$-element uniform linear array (ULA). The hardware configuration of Alice comprises a {\color{black}DFRC transmitter and a colocated radar receiver} for communication and echo-based sensing {\color{black}in a full-duplex manner}. The transceiver is connected to ULAs with $N_t$ and $N_r$ antennas, respectively. To reduce costs, a hybrid architecture with $N_{rf}$ radio frequency (RF) chains is employed at the main transmitter, whereas the assisting receiver utilizes a fully digital architecture for sensing. Alice transmits ISAC signals to $K_c$ receivers, while simultaneously detecting the location information (angle $\theta$ and distance $r$) of the $L_T$ targets.\footnote{Since Alice is a monostatic radar setting, it has the same angle/distance of arrival and angle/distance of departure.} During this process, Eve remains passive and attempts to intercept the target location information embedded in the reflecting signals. In the following of this paper, we adopt several common assumptions in the ISAC and PLS fields as follows.

\begin{enumerate}
	\item Due to the high carrier frequency and large aperture of the ELAA, all receivers and reflectors are located in the NF region of both Alice and Eve, i.e., within the corresponding Rayleigh distance.
	\item The angle and distance information of each target and scatterer can be acquired by Alice in advance using estimation and tracking algorithms \cite{TarDec}.
	\item Eve remains passive, preventing Alice and Bob from obtaining its CSI and location. Additionally, since Eve cannot actively perform environment sensing, it lacks prior knowledge of any sensing-related information. {\color{black}Given that communication security in ISAC systems has been thoroughly researched \cite{ComSec1,ComSec2,ComSec3,ComSec4,ComSec5}, we assume that Eve is unable to decode the transmitted ISAC signals.}
\end{enumerate}

Without loss of generality, we position the center of Alice's transmitting and receiving array at the origin of the coordinate system $\left(0,0\right)$. The center of Eve's receiving array is located at Cartesian coordinate $\left(r_{E}\sin \theta_E,r_E\cos \theta_E\right)$, while the Cartesian coordinates of $K_c$ receivers of Bob are given by $\left(r_{B}^k\sin \theta_B^k,r_B^k\cos \theta_B^k\right),\;k \in \mathcal{K}_c\triangleq \left[1,2,\cdots,K_c\right]$, {\color{black}where $\theta$ and $r$ denote the angle and distance respect to the center of the transmitting array, i.e. the AoD and distance of distance of departure from Alice.} Similarly, the polar coordinates of the targets and scatterer are $\left(r_R^l,\theta_R^l\right), l \in \mathcal{T}\triangleq \left[1,2,\cdots,L_T\right]$ and $\left(r_R^l,\theta_R^l\right), l \in \mathcal{C}\triangleq \left[L_T+1,L_T+2,\cdots,L\right]$. 

Using the location information, the approximated near-field array response vector is modeled as
 \begin{equation}
	{\left[ {{{\bf{a}}_{  N}}\left( {\theta ,r} \right)} \right]_n} = {e^{ - j\frac{{2\pi }}{\lambda }\left( { - nd\cos \theta  + \frac{{{n^2}{d^2}{{\sin }^2}\theta }}{{2r}}} \right)}},
	\label{NFARV}
\end{equation}
where the index $n \in \left[-\left(N-1\right)/2,\cdots,\left(N-1\right)/2\right]$, and $d = \lambda/2$ represents the element spacing with $\lambda$ denoting the wavelength. According to \cite{NFSurvey1}, the  communication channel from Alice to the $k$th receiver of Bob is given by
\begin{equation}
{{\bf{h}}^k_{{{\rm{B}}}}} = \bar{\alpha} _B^k{{\bf{a}}_{{N_t}}}\left( {\theta _B^k,r_B^k} \right) +
\sum\limits_{l \in {\cal T} \cup {\cal C}} {\tilde \alpha _{B}^{l,k}} {{\bf{a}}_{{N_t}}}\left( {\theta _{R}^l,r_{R}^l} \right),
	\label{ComChan}
\end{equation}
 where $\bar{\alpha} _B^k$ represents the channel gain of the LoS path, modeled as $\bar{\alpha}_B^k = \sqrt{\rho_0}/r_{B}^k e^{-j 2\pi r_{B}^k/\lambda}$ with $\rho_0 = \lambda/4\pi$ being the reference pathloss. $\tilde{\alpha}_{B}^{l,k} = \sqrt{\rho_0}/\left(r_B^k + r_{R}^l\right) e^{-j 2\pi \left(r_B^k + r_{R}^l\right)/\lambda}$ denotes the reflection coefficients of the NLoS paths. The following channels have the same definitions of channel gains and reflection coefficients. The channel from Alice to Eve is expressed as
 \begin{equation}
{{\bf{H}}_{{E}}} = {\bf{H}}_{{E}}^{\rm LoS} +\sum\limits_{l \in {\cal T} \cup {\cal C}}{{\bf{H}}_{{E}}^l} ,\;{\kern 1pt} {\kern 1pt} 
 \end{equation}
 where 
 \begin{equation}
 {\left[ {{\bf{H}}_{{E}}^{\rm LoS}} \right]_{n,m}} = \bar{\alpha}_E {e^{ - j2\pi /\lambda \left\| {{\bf{r}}_A^n - {\bf{r}}_E^m} \right\|}},\;{\kern 1pt} 
 \end{equation}
 \begin{equation}
 {\bf{H}}_{{E}}^l = \tilde{\alpha} _{{E}}^l{{\bf{a}}_{{N_t}}}\left( {{\theta_{R}^l,r_{R}^l}} \right){\bf{a}}_{N_e}^{\rm{H}}\left( {{\tilde{\theta}_{R}^l,\tilde{r}_{R}^l}} \right).
 \end{equation}
 Here, the term $ \left\| {{\bf{r}}_A^n - {\bf{r}}_E^m} \right\|$ represents the distance between the $n$th transmit antenna of Alice and the $m$th receive antenna of Eve. $\left( {{\tilde{\theta}_{R}^l,\tilde{r}_{R}^l}} \right)$ denotes the angle of arrival (AoA) and distance of arrival at Eve for the $l$-th reflector, which is related to the departure parameters as follows
 \begin{equation}
 	\small
\left\{ \begin{array}{l}
	\tilde r = \sqrt {{{\left( {r\sin \theta  - {r_E}\sin {\theta _E}} \right)}^2} + {{\left( {r\cos \theta  - {r_E}\cos {\theta _E}} \right)}^2}}, \\
	\tilde \theta  = \arctan \left( {\frac{{r\cos \theta  - {r_E}\cos {\theta _E}}}{{\sqrt {{{\left( {r\sin \theta  - {r_E}\sin {\theta _E}} \right)}^2} + {{\left( {r\cos \theta  - {r_E}\cos {\theta _E}} \right)}^2}} }}} \right),
\end{array} \right.
 \end{equation}

 Finally, the round-trip channel for sensing and the self-interference (SI) channel \cite{SI} at Alice are given by
 \begin{equation}
 \small
\mathbf{H}_A = \sum\limits_{l \in {\cal T} \cup {\cal C}} {\bf{H}}_{{A}}^l = \sum\limits_{l \in {\cal T} \cup {\cal C}} \tilde{\alpha} _{\rm{A}}^l{{\bf{a}}_{{N_t}}}\left( \theta_R^l, r_R^l \right){\bf{a}}_{{N_r}}^{\rm{H}}\left(  \theta_R^l, r_R^l\right),
 \end{equation} 
 \begin{equation}
 {\left[ {{{\bf{H}}_{{\rm{SI}}}}} \right]_{n,m}} = \frac{\rho_{\rm SI} }{{{ \left\| {{\bf{r}}_A^n - {\bf{r}}_A^m} \right\|}}}{e^{ - j\frac{{2\pi }}{\lambda }{ \left\| {{\bf{r}}_A^n - {\bf{r}}_A^m} \right\|}}},
 \end{equation}
 where $\rho_{\rm SI}$ is a power normalization constant, and {\color{black}$ \left\| {{\bf{r}}_A^n - {\bf{r}}_A^m} \right\|$ represents the distance between the $n$th and $m$th transmit antenna of Alice.}

\subsection{Transmission Signal Model}
First, the transmitted ISAC waveform $\mathbf{X} \in \mathbb{C}^{N_t \times S}$ in $S$ frames/snapshots is formulated as
\begin{equation}
{\bf{X}}  = \mathbf{F}\mathbf{W}\mathbf{S} = {\bf{F}}\left( {{{\bf{W}}_{{c}}}{{\bf{S}}_{\rm{c}}} + {{\bf{W}}_{{r}}}{{\bf{S}}_{{r}}}} \right),
\label{TxSig}
\end{equation}
where ${\bf{F}} \in {^{{N_t} \times {N_{rf}}}}$ is the analog beamforming (ABF) matrix, $\mathbf{W} = \left[\mathbf{W}_c \; \mathbf{W}_r\right]$ represents the overall digital beamforming (DBF) matrix, comprising the communication DBF matrix $\mathbf{W}_c \in \mathbb{C}^{N_{rf}\times K_c}$ and radar sensing DBF matrix $\mathbf{W}_r \in \mathbb{C}^{N_{rf} \times K_r}$. It is noteworthy that the number of radar probing streams $K_r$ is important, and we will analyze and design it later in the optimization part. The communication signal $\mathbf{S}_c$ and radar sensing signal $\mathbf{S}_r$ are assumed to be independent complex Gaussian sources, i.e. $\mathbb{E}\left\{\mathbf{S}\mathbf{S}^{\rm H}\right\} = \mathbf{I}_{K_c+K_r}$, where $\mathbf{S} = \left[\mathbf{S}_c^{\rm T} \; \mathbf{S}_r^{\rm T}\right]^{\rm T}$. Thus, the sample transmit covariance matrix is given by
\begin{equation}
\begin{aligned}
\mathbf{R}_X = \frac{1}{S} \mathbf{X}\mathbf{X}^{\rm H} &= \mathbf{R}_c + \mathbf{R}_r \\
&\approx \mathbf{F}\mathbf{W}_c \mathbf{W}_c^{\rm H} \mathbf{F}^{\rm H} + \mathbf{F}\mathbf{W}_r \mathbf{W}_r^{\rm H} \mathbf{F}^{\rm H} 
\end{aligned}
\end{equation}
\subsubsection{Communication Model and Optimization Metrics}
Considering a single frame/snapshot in (\ref{TxSig}), the received signal at the $k$th Bob receiver is given by
\begin{equation}
\begin{aligned}
{y^k_{{{{B}}}}} &= {\bf{h}}_{{{{B}}}}^{k\;\rm{H}}{\bf{F}}{\left[ {{{\bf{W}}_{\rm{c}}}} \right]_{k}}{\left[ {{{\bf{s}}_{{c}}}} \right]_k} \\
&+ \sum\limits_{j \ne k} {{\bf{h}}_{{{\rm{B}}}}^{k\;\rm{H}}{\bf{F}}{{\left[ {{{\bf{W}}_{{c}}}} \right]}_{j}}{{\left[ {{{\bf{s}}_{\rm{c}}}} \right]}_j}}  + {\bf{h}}_{{{{B}}}}^{k\;\rm{H}}{\bf{F}}{{\bf{W}}_{{r}}}{{\bf{s}}_{{r}}} + {n^k_{{{{B}}}}},
\end{aligned}
\end{equation}
where ${n^k_{{{{B}}}}} \sim \mathcal{CN} \left(0,\sigma_{\rm n}^2\right)$ is the additive white Gaussian noise (AWGN) at the $k$th Bob receiver. For simplicity, we denote $\left[ {{{\bf{W}}_{\rm{c}}}} \right]_{k}$ as $\mathbf{w}_c^k$ and $\mathbf{R}_c^k = \mathbf{w}_c^k \mathbf{w}_c^{k\, {\rm H}}$ in the remainder of this paper. Accordingly, the $k$th Bob receiver's SINR is given by
\begin{equation}
{\gamma^k_{{{{B}}}}}  = \frac{{{{\left| {{\bf{h}}_{{{{B}}}}^{k\;\rm{H}}{\bf{Fw}}_c^{{k}}} \right|}^2}}}{{\sum\limits_{j \ne k} {{{\left| {{\bf{h}}_{{{{B}}}}^{k\;\rm{H}}{\bf{Fw}}_c^{{j}}} \right|}^2} + {{\left\| {{\bf{h}}_{{{\rm{B}}}}^{k\;\rm{H}}{\bf{F}}{{\bf{W}}_{{r}}}} \right\|}^2} + \sigma _{\rm{n}}^2} }}{\kern 1pt}.
\end{equation}
Then, the achievable rate of Bob is selected as the communication optimization metric, formulated as
\begin{equation}
{R_{{B}}} = \sum\limits_{k = 1}^K R_B^k =\sum\limits_{k = 1}^K {{{\log }_2}\left( {1 + {\gamma^k_{{{{B}}}}}} \right)} .
\end{equation}
\subsubsection{Radar Sensing Model and Metrics}
First, the received signals of the legitimate radar Alice and Eve are respectively given by
\begin{equation}
{{\bf{Y}}_{{A}}} = \overbrace {\sum\limits_{{l} \in {{\cal T}}}^{} {{\bf{H}}_{{A}}^{{l} \;{\rm{H}}}{\bf{X}}} }^{{\rm{Targets}}} + \overbrace {\sum\limits_{{l} \in {{\cal C}}}^{} {{\bf{H}}_{{A}}^{{l}\;{\rm{H}}}{\bf{X}}} }^{{\rm{Clutters}}} + \overbrace {{\bf{H}}_{{\rm{SI}}}^{\rm{H}}{\bf{X}}}^{{\rm{SI}}} + \overbrace {{{\bf{N}}_{{A}}}}^{{\rm{AWGN}}},
\end{equation}
\begin{equation}
{{\bf{Y}}_{{E}}} = \overbrace {\sum\limits_{{l} \in {\cal T}}^{} {{\bf{H}}_{{E}}^{{l}\;{\rm{H}}}{\bf{X}}} }^{{\rm{Targets}}} + \overbrace {\sum\limits_{{l} \in {{\cal C}}}^{} {{\bf{H}}_{{E}}^{{l} \;{\rm{H}}}{\bf{X}}} }^{{\rm{Clutters}}} + \overbrace {{\bf{H}}_{{E}}^{{\rm{LoS \; H}}}{\bf{X}}}^{{\rm{LoS}}} + \overbrace {{{\bf{N}}_{{E}}}}^{{\rm{AWGN}}},
\end{equation}
where $\mathbf{N}_A$ and $\mathbf{N}_E$ are AWGN matrices, with each element obeying $\mathcal{CN}\left(0,\sigma_{\rm n}^2\right)$. With the transmit signal, SI channel and scatterer channel, the interference components can be removed at Alice\footnote{\color{black}Based on the assumption that the scatterer and SI channels are time-invariant during coherent time, they can be readily subtracted from the received signal $\mathbf{Y}_A$ directly \cite{InterCancel}.}
, thus obtaining receive signals only involved with targets as 
\begin{equation}
{{\bf{Y}}_{{A}}}^{'} =  {\sum\limits_{{l} \in {{\cal T}}}^{} {{\bf{H}}_{{A}}^{{l} \;{\rm{H}}}{\bf{X}}} } +  {{{\bf{N}}_{{A}}}},
\label{AliceTM}
\end{equation}
Since the primary objective of this paper is to conduct LD at Eve using scatterers, the radar SINR at Alice and the power projected toward scatterers are two metrics suitable for optimization, where the radar SINR of the $l_t$th target is defined as
\begin{equation}
\small
	\eta_{{A}}^{l_t }= \frac{{{{\left| {\alpha _{{A}}^{l_t}} \right|}^2}\left\| {{\bf{a}}_{{N_t}}^{\rm{H}}\left( \theta_{R}^{l_t},r_{R}^{l_t} \right){\bf{FW}}} \right\|_2^2}}{{\sum\limits_{l \in \left( {{{\cal T}} \setminus l_t} \right)} {{{\left| {\alpha _{{A}}^l} \right|}^2}} \left\| {{\bf{a}}_{{N_t}}^{\rm{H}}\left( \theta_{R}^l, r_{R}^l \right){\bf{FW}}} \right\|_2^2 + \sigma _{\rm{n}}^2}}.
\end{equation}
Additionally, the power projected toward the $l_c$-th scatterer is expressed as
\begin{equation}
P_{C}^{l_c} = \left( \sqrt{\rho_0}/r_R^{l_c} \right)^2  
\left\| {{\bf{a}}_{{N_t}}^{\rm{H}}\left( \theta_{R}^{l_c},r_{R}^{l_c} \right){\bf{FW}}} \right\|_2^2.
\end{equation}
As previously discussed, Alice possesses prior knowledge of the transmit signals and location information of the scatterers, thus supporting advanced estimation methods and clutter interference cancellation. In contrast, Eve can only perform parameter estimation based on the power of echo signals. As such, it is meaningful to maximize the power projected toward scatterers under radar SINR constraints of Alice.



\subsection{The Number of Radar Sensing Signal Streams}
As demonstrated in \cite{DoF1,DoF2}, the radar signal is essential for ISAC systems when the number of the communication signal streams is less than that of sensing targets. Undoubtedly, optimal sensing performance is achieved when a radar signal with $N_{rf}$ streams is used, as demonstrated by the signal configuration in \cite{DoF1,DoF2}. However, it is observed that the number of radar sensing signal streams is related to the rank of Eve's received covariance matrix, which is regarded as the number of incoming sources in DoA estimation field \cite{MDLAIC}. Thus, instead of using radar signal with $N_{rf}$ streams, we select a radar sensing signal with a dimension of $K_r$, {\color{black}satisfying ${\rm rank}(\mathbf{R}^{'}_X) < L$}, to deceive Eve into misjudging the number of reflectors. Specifically, we first assume a receive beamforming at Eve as
\begin{equation}
{{\bf{P}}_{{E}}} = {\rm{Null}}\left( {{\bf{H}}_{{E}}^{{\rm{LoS H}}}}\; \right),
\end{equation}
which eliminates the uninterested LoS interference at Eve. {\color{black}Here, $ {\rm{Null}}\left( {{\bf{H}}_{{E}}^{{\rm{LoS H}}}}\; \right)$ denotes the null-space of $ {{\bf{H}}_{{E}}^{{\rm{LoS H}}}}$, which can be obtained using singular value decomposition (SVD) easily.} Next, the rank of the received covariance matrix at Alice is given by
\begin{equation}
{{\bf{R}}_{{A}}} = \mathbb{E} \left\{ {{{\bf{Y}}^{'}_{\rm{A}}}{\bf{Y}}_{{A}}^{'\;\rm{H}}} \right\} = {\bf{H}}_{{A}}^{{'}\;{\kern 1pt} {\rm{H}}}{\bf{R}}_{{X}}^{{'}}{\bf{H}}_{{A}}^{{'}} + \sigma _{\rm{n}}^2 {\bf I},
\end{equation}
where $\mathbf{R}_X^{'} ={{\bf{1}}_{{L_T} }} \otimes {\bf R}_X$, and ${\bf{H}}_{\rm{A}}^{{'}}$ is expressed as
\begin{equation}
	{\rm{ }}{\bf{H}}_{{A}}^{'\;{\kern 1pt} {\rm{H}}} = \left[ {\begin{array}{*{20}{c}}
			{{\bf{H}}_{{A}}^{1\;{\kern 1pt} {\rm{H}}}}& \cdots &{{\bf{H}}_{{A}}^{L_T\;{\kern 1pt} {\rm{H}}}}
	\end{array}} \right]
\end{equation} 
Under the noiseless scenario, we can prove that
\begin{equation}
\begin{aligned}
{\rm{rank}}\left( \mathbf{R}_A \right) &\leq \min \left\{ {{\rm{rank}}\left( {{\bf{H}}_{{A}}^{'}} \right),{\rm{rank}}\left( {{\bf{R}}_{{X}}^{'}} \right)} \right\} \\ 
&\leq \min \left\{ {{L_T},K_c+K_r} \right\}.
\end{aligned}
\end{equation}
Similarly, at Eve's side, the received covariance matrix is expressed as
\begin{equation}
{{\bf{R}}_{{E}}} = \mathbb{E} \left\{ {{{\bf{Y}}^{'}_{{E}}}{\bf{Y}}_{{E}}^{'\;\rm{H}}} \right\} = {\bf{H}}_{{E}}^{{'}\;{\kern 1pt} {\rm{H}}}{\bf{R}}_{{X}}^{{''}}{\bf{H}}_{{E}}^{{'}} + \sigma _{\rm{n}}^2{\bf{P}}_{{E}} {\bf{P}}_{{E}}^{\;{\kern 1pt} {\rm{H}}},
\end{equation}
where $\mathbf{R}_X^{''} = {{\bf{1}}_{{L+1} }} \otimes {\bf R}_X$, and ${\bf{H}}_{{E}}^{'\;{\kern 1pt} {\rm{H}}}$ is expressed as
\begin{equation}
\begin{array}{c}
	{\rm{ }}{\bf{H}}_{{E}}^{'\;{\kern 1pt} {\rm{H}}} = {{\bf{P}}_{{E}}}\left[ {\begin{array}{*{20}{c}}
			{{\bf{H}}_{{E}}^{1\;{\kern 1pt} {\rm{H}}}}& \cdots &{{\bf{H}}_{{E}}^{L\;{\kern 1pt} {\rm{H}}}}&{{\bf{H}}_{{{E}}}^{{\kern 1pt} {\rm{LoS \;H}}}}
	\end{array}} \right]\\
	\;\;\;\;\;= \left[ {\begin{array}{*{20}{c}}
			{{{\bf{P}}_{{E}}}{\bf{H}}_{{E}}^{1\;{\kern 1pt} {\rm{H}}}}& \cdots &{{{\bf{P}}_{{E}}}{\bf{H}}_{{E}}^{{L}\;{\kern 1pt} {\rm{H}}}}&{\bf{0}}
	\end{array}} \right].
\end{array}
\end{equation}
Thus, the rank of Eve's covariance matrix is bounded as
\begin{equation}
\begin{aligned}
	{\rm{rank}}\left( \mathbf{R}_E \right) &\leq \min \left\{ {{\rm{rank}}\left( {{\bf{H}}_{{E}}^{'}} \right),{\rm{rank}}\left( {{\bf{R}}_{{X}}^{''}} \right)} \right\} \\ 
	&\leq \min \left\{ {{L},K_c+K_r} \right\}.
\end{aligned}
\end{equation}
Here, the difference between Alice and Eve becomes evident. Compared to Alice, Eve requires transmit signals with higher rank to to estimate the number of all the reflectors using techniques such as Akaike information criterion (AIC), minimum description length (MDL) along with MUSIC, etc.  {\color{black}To inhibit the sensing-eavesdropping performance at Eve, the upper bound of the rank of Eve's covariance is constrained as $K_c + K_r  < L$, thus resulting in a mis-detection of the number of the reflectors.}


\begin{remark}
Rank deficiency indicates that the echo signals are highly coherent, which can be alleviated by the famous spatial smoothing technique \cite{SS}. Therefore, the rank constraint is effective against a weak Eve without SS but can be easily defused when Eve employs SS. 
Later in the simulation part, we will further demonstrate that SS is essential for improving estimation performance at Eve.
\end{remark}
%

\section{Optimization Problem Formulation}
In this section, we formulate a secure beamforming optimization problem for the proposed system based on the metrics introduced in the previous section.
\vspace{-3mm}

\subsection{Transmitting Covariance Matrix Optimization}
Since prior information about Eve's location or CSI is unavailable, the objective is to maximize Bob's communication rate and the minimum power allocated for LD, subject to legitimate radar SINR and power budget constraints, as follows
\begin{subequations}
\begin{align}
\mathcal{P}_1:\;&\mathop {\max }\limits_{{{\bf{R}}_X}} w_c {R_{{B}}} +w_{ss}\mathop {{\rm{min}}}\limits_{l \in {{\cal C}}} \left( {P_C^l} \right)\\
&{\rm{s}}{\rm{.t}}{\rm{.}}\; {{\bf{R}}_X} = {{\bf{R}}_{\rm{c}}} + {{\bf{R}}_r},\\
&\;{\kern 1pt} \;{\kern 1pt} \;{\kern 1pt} \;\eta _{\rm{A}}^l \geq {P_{{\rm{SINR}}}},\forall l \in {\cal T},\\
&\;{\kern 1pt} \;{\kern 1pt} \;{\kern 1pt} \;{\rm{Tr}}\left( {{{\bf{R}}_X}} \right) \le {P_T},\\
&\;{\kern 1pt} \;{\kern 1pt} \;{\kern 1pt} \;{\rm{Rank}}\left( {{{\bf{R}}_c^k}} \right) = 1,\forall k \in {\cal K}{_c},\\
&\;{\kern 1pt} \;{\kern 1pt} \;{\kern 1pt} \;{\rm{Rank}}\left( {{{\bf{R}}_{\rm{r}}}} \right) = \;{K_r},\\
&\;{\kern 1pt} \;{\kern 1pt} \;{\kern 1pt} \;{{\bf{R}}_c^k},{{\bf{R}}_{\rm{r}}} \in {\mathbb{H}^+ },\forall k \in {{\cal K}_c},
\end{align}
\label{opt1}%
\end{subequations}
{\color{black}where $w_c$ and $w_{ss}$ are the weighting factors for communication and LD power, satisfying $w_c + w_{ss} = 1$. ${P_{{\rm{SINR}}}} = w_s P_{\rm SINR}^{\rm max}$ represents the SINR QoS constraint, with $w_s \in [0,\;1]$ denoting the weight for sensing and $P_{\rm SINR}^{\rm max}$ representing the maximum radar SINR at Alice.} $P_T$ denotes the transmit power budget. By introducing a slack variable $\delta$, the max-min structure in the objective can be removed as
\begin{subequations}
	\begin{align}
		&\mathop {\max }\limits_{{{\bf{R}}_X}} w_c{R^{\rm FP}_{\rm{B}}} + w_{ss}\delta\\
		&{\rm{s}}{\rm{.t}}{\rm{.}}\; {P_C^l} \geq \delta,\; l\in \mathcal{C},\\
		&\;{\kern 1pt} \;{\kern 1pt} \;{\kern 1pt} \;(\ref{opt1}{\rm b}) - (\ref{opt1}{\rm g}).\nonumber
	\end{align}
\end{subequations}
Using the SDR technique, we temporarily remove (\ref{opt1}e) and (\ref{opt1}f). Furthermore, we apply the quadratic transform in \cite{FP} to address the sum-of-functions-of-ratio form of $R_B$ as
\begin{equation}
R^{\rm FP}_B = \sum\limits_{k = 1}^{K_c} {{\log_2}} \left( 1 + {2{\beta _k}\sqrt {{A_k}}  - \beta_k^2{B_k}} \right),
\end{equation}
where $\beta_k, k \in \mathcal{K}_c$ are auxiliary variables with optimal solution $\beta _k^{\star} = \sqrt{{{ A}_k^{\star}}} / {{ B}_k^{\star}}$. ${{A}}_k$ and ${{B}}_k$ are expressed as
\begin{equation}
{{A}}_k= {\rm Tr} \left(\mathbf{h}_{B_k}^{\rm H} \mathbf{R}_c^k \mathbf{h}_{B_k}  \right),
\end{equation}
\begin{equation}
{{B}}_k = \sum\limits_{j \ne k}^{K_c} {{\text{Tr}}\left( \mathbf{h}_{B_k}^{\rm H} \mathbf{R}_c^j \mathbf{h}_{B_k} \right)}  + 
{{\text{Tr}}\left( \mathbf{h}_{B_k}^{\rm H} \mathbf{R}_r \mathbf{h}_{B_k} \right)} 
+ \sigma _{\rm n} ^2.
\end{equation}
{\color{black}Subsequently, SDR is employed to achieve a convex semidefinite programming (SDP) as
\begin{subequations}
	\begin{align}
		\mathcal{P}_2:\;&\mathop {\max }\limits_{{{\bf{R}}_X}} w_c{R^{\rm FP}_{\rm{B}}} + w_{ss}\delta\\
		&{\rm{s}}{\rm{.t}}{\rm{.}}\; {P_C^l} \geq \delta,\; l\in \mathcal{C}\\
		&\;{\kern 1pt} \;{\kern 1pt} \;{\kern 1pt} \;(\ref{opt1}{\rm b}) - (\ref{opt1}{\rm d}),\; (\ref{opt1}{\rm g}), \nonumber
	\end{align}
	\label{opt2}%
\end{subequations}}which can be solved by optimizing $\mathbf{R}_X$ and $\beta_k$ in an alternating manner. 
\begin{lemma}
{\color{black}If problem $\mathcal{P}_2$ is feasible, we can always achieve optimal solutions $\mathbf{R}_c^{k\,\star}$ and $\mathbf{R}_r^{\star}$, with ${\rm rank}\left(\mathbf{R}_c^{k \, \star}\right) = 1,\, \forall k \in \mathcal{K}_c$ and $\mathbf{R}_r^{\star} = \mathbf{0}$.}
\label{lemma1}
\end{lemma}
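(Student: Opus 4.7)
My plan is a two-step reduction. \textbf{Step 1 (eliminating $\mathbf{R}_r^\star$).} Given any optimum $(\{\mathbf{R}_c^{k\,\star}\}, \mathbf{R}_r^\star)$ of $\mathcal{P}_2$, I would construct
\[
\tilde{\mathbf{R}}_c^1 = \mathbf{R}_c^{1\,\star}+\mathbf{R}_r^\star,\quad \tilde{\mathbf{R}}_c^k = \mathbf{R}_c^{k\,\star}\ (k\geq 2),\quad \tilde{\mathbf{R}}_r = \mathbf{0}.
\]
Since the transmit covariance $\tilde{\mathbf{R}}_X = \mathbf{R}_X^\star$ is unchanged, every constraint that depends only on $\mathbf{R}_X$---the radar SINR, the scatterer-power bound, and the trace budget---remains satisfied, and the semidefinite constraint is immediate. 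The key observation is the SINR effect: for each user $k \neq 1$, the interference aggregate $\mathbf{h}_B^{k\,\mathrm{H}}(\mathbf{R}_c^{1\,\star}+\mathbf{R}_r^\star)\mathbf{h}_B^k$ appears identically in $B_k$ before and after the transfer (both $\mathbf{R}_c^{1\,\star}$ and $\mathbf{R}_r^\star$ count as interference for user $k$), so $A_k$ and $B_k$ are unchanged; for user $1$, $A_1$ strictly gains $\mathbf{h}_B^{1\,\mathrm{H}}\mathbf{R}_r^\star\mathbf{h}_B^1$ while $B_1$ loses exactly the same amount because $\mathbf{R}_r^\star$ no longer interferes with user $1$. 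Hence $R_B^{\rm FP}$ weakly increases and $\mathbf{R}_r^\star = \mathbf{0}$ may be assumed without loss of optimality.

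\textbf{Step 2 (rank-one $\mathbf{R}_c^{k\,\star}$).} With $\mathbf{R}_r = \mathbf{0}$ fixed, the remaining problem is a convex SDP in $\{\mathbf{R}_c^k\}$. I would write the Lagrangian with nonnegative scalar multipliers $\lambda$ (trace budget), $\mu_l$ (scatterer bound), $\nu_{l_t}$ (radar SINR), and a matrix multiplier $\mathbf{Z}_k \succeq \mathbf{0}$ for $\mathbf{R}_c^k \succeq \mathbf{0}$. Stationarity with respect to $\mathbf{R}_c^k$ yields
\[
\mathbf{Z}_k = \mathbf{M}_k - w_c c_k\,\mathbf{h}_B^k\mathbf{h}_B^{k\,\mathrm{H}},
\]
where $c_k > 0$ arises from the quadratic-transform derivative and $\mathbf{M}_k$ collects $\lambda\mathbf{I}$, the nonnegative cross-user terms $w_c d_j\mathbf{h}_B^j\mathbf{h}_B^{j\,\mathrm{H}}$ with $j\neq k$, and the sensing-related contributions. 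If $\mathbf{M}_k \succ \mathbf{0}$, subtracting a rank-one matrix reduces its rank by at most one, so $\mathrm{rank}(\mathbf{Z}_k)\geq N_t-1$; complementary slackness $\mathbf{Z}_k \mathbf{R}_c^{k\,\star} = \mathbf{0}$ then forces $\mathrm{rank}(\mathbf{R}_c^{k\,\star}) \leq 1$, and nontriviality of the SINR (i.e.\ $A_k>0$) gives equality.

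The hardest part will be establishing $\mathbf{M}_k \succ \mathbf{0}$, because the radar-SINR coefficient matrix $\mathbf{A}_{l_t}=|\alpha_A^{l_t}|^2\mathbf{a}_{l_t}\mathbf{a}_{l_t}^{\mathrm{H}} - P_{\rm SINR}\sum_{l'\neq l_t}|\alpha_A^{l'}|^2\mathbf{a}_{l'}\mathbf{a}_{l'}^{\mathrm{H}}$ is generally indefinite and the scatterer multipliers enter with a negative sign. I would first show that the power budget is active at the optimum, so that $\lambda > 0$: otherwise, scaling $\mathbf{R}_c^{k\,\star}$ along $\mathbf{h}_B^k\mathbf{h}_B^{k\,\mathrm{H}}$ strictly improves $R_B^{\rm FP}$ without violating any other constraint, contradicting optimality. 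Combined with strict feasibility of the sensing constraints (Slater's condition), which keeps $\mu_l$ and $\nu_{l_t}$ bounded, the $\lambda\mathbf{I}$ term can be made to dominate the indefinite contributions on the subspace orthogonal to $\mathbf{h}_B^k$. Should this bound prove delicate, a constructive fallback is the MMSE-direction rank reduction $\mathbf{w}_k = \mathbf{R}_c^{k\,\star}\mathbf{h}_B^k/\sqrt{\mathbf{h}_B^{k\,\mathrm{H}}\mathbf{R}_c^{k\,\star}\mathbf{h}_B^k}$, for which $\mathbf{R}_c^{k\,\star}-\mathbf{w}_k\mathbf{w}_k^{\mathrm{H}}\succeq\mathbf{0}$ and has $\mathbf{h}_B^k$ in its null space; replacing $\mathbf{R}_c^{k\,\star}$ by $\mathbf{w}_k\mathbf{w}_k^{\mathrm{H}}$ preserves every SINR, and re-absorbing the residual via Step 1 across users reaches the claimed rank-one and $\mathbf{R}_r^\star=\mathbf{0}$ structure in finitely many iterations.
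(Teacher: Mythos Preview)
Your proposal is correct and follows essentially the same two-step structure as the paper: first absorb $\mathbf{R}_r^\star$ into the communication covariances (the paper distributes $\mathbf{R}_r^\star/K_c$ equally to every $\mathbf{R}_c^k$ rather than all to user~1, but both variants leave $\mathbf{R}_X$ unchanged and weakly improve every SINR), then invoke the KKT stationarity $\mathbf{\Gamma}_c^{k\,\star}=\varsigma\mathbf{I}+\mathbf{T}_1-\mathbf{T}_2$ with $\mathbf{T}_2$ rank-one to force $\mathrm{rank}(\mathbf{R}_c^{k\,\star})\leq 1$. The paper does not work through the positive-definiteness issue you flag for $\mathbf{M}_k$ (the scatterer and indefinite radar-SINR contributions) but simply defers to the argument of Proposition~4.1 in the cited reference; your explicit discussion of why $\lambda>0$ and your MMSE-direction fallback are therefore more self-contained than what the paper actually provides.
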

\begin{proof}
	See \cite{Rank1Proof} and Appendix A of this paper.
\end{proof}
\begin{remark}
 SDP problems involving radar SINR always have a low-rank optimal solution under a limited number of constraints \cite{SDPRank}. A similar conclusion can be found in \cite{ZeroRadar}. However, if we focus on the direct estimation performance, e.g. minimizing the CRB \cite{DoF2}, the low-rank structure disappears and radar sensing covariance matrix is no longer zero.
\end{remark}
Next, we demonstrate how to obtain a radar signal transmitting covariance matrix with a fixed rank constraint in (\ref{opt1}f). Using the Courant-Fischer theorem, an extension of Rayleigh–Ritz theorem, we can prove that the rank constraint of a positive semidefinite matrix is equivalent to the following eigenvalue constraints
\begin{equation}
{\rm{Rank}}\left( {{{\bf{R}}_{\rm{r}}}} \right) = {K_r}\; \Leftrightarrow \left\{ \begin{array}{l}
	\sum\limits_{i = 1}^{{N_t} - {K_r}} {{\lambda _i} = 0} ,\\
	\sum\limits_{i = 1}^{{N_t} - {K_r} + 1} {{\lambda _i} \ge \kappa {P_S}/{K_r}\;},
\end{array} \right.
\label{rankconstraint}
\end{equation}
where $\kappa$ represents the power allocation between the communication signal and the radar sensing signal, and the eigenvalues of $\mathbf{R}_r$ are denoted as $\left\{ {{\lambda _1} \cdots {\lambda _{{N_t}}}} \right\}$ in an increasing order. Since summations of an arbitrary number of minimum eigenvalues here are concave, the second constraint on the right side of (\ref{rankconstraint}) is a convex one. To address the first nonconvex constraint, we adopt successive convex approximation (SCA) and rewrite it as follows
\begin{subequations}
\begin{align}
&\sum\limits_{i = 1}^{{N_t} - {K_r}} {{{\tilde \lambda }_i} + {\bf{\tilde v}}_i^{\rm{H}}\left( {{{\bf{R}}_{\rm{r}}} - {{{\bf{\tilde R}}}_{\rm{r}}}} \right){{{\bf{\tilde v}}}_i} \le 0} ,\\
&\sum\limits_{i = 1}^{{N_t} - {K_r} + 1} {{\lambda _i} \ge \kappa {P_S}/{N_{rf}}} ,
\end{align}
\label{rankcons}%
\end{subequations}
where $\tilde{\mathbf{R}}_r$ is the first-order Taylor expansion of $\mathbf{R}_r$ with $\left\{ {{{\tilde \lambda }_1} \cdots {{\tilde \lambda }_{{N_t}}}} \right\}$ and $\left\{ {{{{\bf{\tilde v}}}_1} \cdots {{{\bf{\tilde v}}}_{{N_t}}}} \right\}$ representing its increasing-order eigenvalues and corresponding eigenvectors. Therefore, the optimization problem ensuring a fixed-rank radar covariance matrix can be reformulated as
\begin{subequations}
	\begin{align}
		\mathcal{P}_3:\;&\mathop {\max }\limits_{{{\bf{R}}_X}} w_c{R^{\rm FP}_{\rm{B}}} + w_{ss}\delta\\
		&{\rm{s}}{\rm{.t}}{\rm{.}}\; (\ref{opt1}{\rm b}) - (\ref{opt1}{\rm d}),\; (\ref{opt1}{\rm g}), \;(\ref{opt2}{\rm b}),\; (\ref{rankcons}{\rm a}), \;(\ref{rankcons}{\rm b}).\nonumber
	\end{align}
	\label{opt3}%
\end{subequations}
This problem is a standard convex SDP problem and can be solved by the CVX toolbox. Note that the above procedure reduces to a power constraint if $K_r = 1$. Compared with problem $\mathcal{P}_2$,  problem $\mathcal{P}_3$ has two extra affine constraints in (\ref{rankcons}a) and (\ref{rankcons}b). Thus, based on Lemma \ref{lemma1}, rank-1 communication covariance matrices and rank-$K_r$ radar covariance matrix are guaranteed.

\subsection{Hybrid Beamforming Approximation}
With the optimal solution $\mathbf{R}_{c}^{k\;\star}$ and $\mathbf{R}_r^{\star}$ obtained from problem $\mathcal{P}_3$, their eigenvalue decomposition (EVD) given by $\mathbf{R}_{ck}^{\star} = \lambda_k \mathbf{u}_k\mathbf{u}_k^{\rm H}$ and $\mathbf{R}_{r}^{\star} =  \mathbf{U} \mathbf{\Lambda} \mathbf{U}^{\rm H}$, the fully digital beamformers can be achieved as
\begin{equation}
\begin{aligned}
&\mathbf{w}_{ck}^{\rm FD} = \sqrt{\lambda_k}\mathbf{u},\\
&\mathbf{W}_r^{\rm FD} = \sqrt{\mathbf{\Lambda}} \mathbf{U}.
\end{aligned}
\end{equation}
Then, the ABF and DBF matrices for communication are obtained by the following optimization
\begin{subequations}
\begin{align}
\mathcal{P}_4:\;&\mathop {\min }\limits_{\mathbf{F},\;\mathbf{W}_c} \|\mathbf{F}\mathbf{W}_c - \mathbf{W}_c^{\rm FD}\|_F^2,\\
&{\rm{s}}{\rm{.t}}{\rm{.}}\; \left|\left[\mathbf{F}\right]_{i,j}\right| = 1,\; \forall i,\;j,\\
&\;{\kern 1pt} \;{\kern 1pt} \;{\kern 1pt} \;  \; \|\mathbf{F}\mathbf{W}_c \|_F^2=\left(1-\kappa \right)P_S,
\end{align}
\label{MO}%
\end{subequations}
where $\mathbf{W}_c^{\rm FD} = \left[\mathbf{w}_{c1}^{\rm FD},\; \dots,\; \mathbf{w}_{cK_c}^{\rm FD}\right]$. This is a classic approximation in hybrid beamforming field \cite{HBF}, which can be solved by manifold optimization method given in \cite{MO}. With the optimal ABF matrix $\mathbf{F}^{\star}$, the least square solution for the DBF matrix of the radar sensing signal is given by
\begin{equation}
\mathbf{W}_r^{\star} = \mathbf{F}^{\dag} \mathbf{W}_r^{\rm FD}.
\label{LS}
\end{equation}
\subsection{Complexity Analysis}
Based on the above three subsections, the detailed solution process for problem $\mathcal{P}_1$ is summarized in Algorithm \ref{Alg1}. After relaxation, problem $\mathcal{P}_1$  is a standard SDP problem, which can be solved via the interior point method with a complexity order of ${\cal O}\left( {\max {{\left( {m,n} \right)}^4}{n^{1/2}}\log \left( {1/\varepsilon } \right)} \right)$ \cite{SDR}, where $n$ represents the dimension of the optimization variable, $m$ denotes the number of equality and inequality constraints, and $\varepsilon$ is the solution accuracy. For problem $\mathcal{P}^4$, the complexity primarily depends on the computation of the Euclidean gradient and least square, given by ${\cal O}\left(N_t^2 K_c N_{rf}\right)$. Thus the overall complexity of Algorithm \ref{Alg1} is of the order of {\small${\cal O}\left\{I_{\rm FP}I_{\rm SCA}  \max\left(K_c+1,L\right)^4 L^{1/2}\log \left( 1/\varepsilon  \right)  +  I_{\rm MO}N_t^2 K_c N_{rf}\right\}$}.

\begin{algorithm}[]
	\caption{Optimization Algorithm of problem $\mathcal{P}_1$.}\label{Alg1}
	\begin{algorithmic}[1]
		\REQUIRE ~~\\
		Environment parameters $L_T,\;L_C$, signal dimension $K_c,\;K_r$, FP and SCA iteration number $I_{\rm FP},\; I_{\rm SCA}$, initial covariance matrix $\mathbf{R}_c$, $\mathbf{R}_r$,
		\ENSURE ~~\\
		Optimal hybrid beamformer $\mathbf{F}^{\star}$, $\mathbf{W}_c^{\star}$, $\mathbf{W}_r^{\star}$;\\
		
		\FOR{$i=0$; $i<I_{\text{FP}}$; $i++$ }
		\STATE update ${\beta}_k$; \\
		\FOR{$j=0$; $j<I_{\text{SCA}}$; $j++$ }
		\STATE update $\tilde{\mathbf{R}}_r$; \\
		\STATE Calculate optimal covariance $\mathbf{R}_c^{\star}$ and $\mathbf{R}_r^{\star}$ according to problem (\ref{opt3}); \\
		\ENDFOR
		\ENDFOR
		\STATE Given the optimal solution $\mathbf{R}_c^{\star}$ and $\mathbf{R}_r^{\star}$, calculate the optimal hybrid beamformer $\mathbf{F}^{\star}$, $\mathbf{W}_c^{\star}$ and $\mathbf{W}_r^{\star}$ according to $\mathcal{P}_4$ and (\ref{LS}).
	\end{algorithmic}
\end{algorithm}

\section{Location Parameters Estimators}
{\color{black}In this section, we discuss three types of parameter estimators, Capon, MUSIC and MLE. Capon's method \cite{Capon}, also known as minimum variance distortionless response (MVDR), describes the received beampattern of the receiver, formulated as
\begin{equation}
	P_{\rm Cap} \left({\theta},{r}\right)= \frac{1}{\mathbf{a}^{\rm H}\left({\theta},{r}\right) \mathbf{R}^{-1}_{y} \mathbf{a}\left({\theta},{r}\right)},
\end{equation}
where $\mathbf{R}_y = \mathbf{Y}\mathbf{Y}^{\rm H}/S$ represents the sampled receive covariance matrix. In contrast, the MUSIC method does not rely on the power of the incoming waves but utilizes the orthogonality between the signal subspace and the noise subspace, generally achieving better performance in low SNR conditions. The two dimensional (2-D) MUSIC spectrum in NF is given by
\begin{equation}
	{P_{{\rm{MUSIC}}}}\left( { \theta , r} \right) = \frac{1}{{{\bf{a}}^{\rm{H}}{{\left( { \theta , r} \right)}}{{\bf{U}}_n}{\bf{U}}_n^{\rm{H}}{\bf{a}}\left( { \theta , r} \right)}},
\end{equation}
where the noise subspace $\mathbf{U}_n$ can be obtained by performing EVD on $\mathbf{R}_y$ as
\begin{equation}
	{{\bf{R}}_y} = \left[ {\begin{array}{*{20}{c}}
			{{{\bf{U}}_s}}&{{{\bf{U}}_n}}
	\end{array}} \right]
	\left[ {\begin{array}{*{20}{c}}
			{{{\bf{\Lambda }}_s}}&{}\\
			{}&{{{\bf{\Lambda }}_n}}
	\end{array}} \right]
	\left[ {\begin{array}{*{20}{c}}
			{{\bf{U}}_s^{\rm{H}}}\\
			{{\bf{U}}_n^{\rm{H}}}
	\end{array}} \right].
\end{equation}
If assuming receive beamforming (RBF) matrix $\mathbf{P}$, for example Eve's RBF in (20), the beamspace-MUSIC (BMUSIC) spectrum \cite{BMUSIC} is formulated as
\begin{equation}
	{P_{{\rm{BMUSIC}}}}\left( { \theta , r} \right) = \frac{{{{\left[ {{\bf{Pa}}\left( { \theta , r} \right)} \right]}^{\rm{H}}}\left[ {{\bf{Pa}}\left( { \theta , r} \right)} \right]}}{{{{\left[ {{\bf{Pa}}\left( { \theta , r} \right)} \right]}^{\rm{H}}}{{\bf{U}}^{'}_n}{\bf{U}}_n^{'\,\rm{H}}\left[ {{\bf{Pa}}\left( { \theta , r} \right)} \right]}},
\end{equation}
{\color{black}where the beamspace noise subspace $\mathbf{U}_n^{'}$ is obtained by the beamspace received covariance $\mathbf{R}_y^{'}=\mathbf{PY}\mathbf{Y}^{\rm H}\mathbf{P}^{\rm H}/S$.} The final estimation results can be obtained by searching for spectral peaks in the spatial spectrum generated by the aforementioned methods, based on the estimated number of reflectors.

It is noteworthy that both Capon and MUSIC do not require the knowledge of the transmit signal $\mathbf{X}$, thus can be employed at both Alice and Eve. Compared to Eve, Alice knows the transmit signal, allowing better estimation performance through MLE as follows
\begin{equation}
	\small
	\left( {{{{\bm{\hat \alpha }}}_A},{\bm{\hat \theta }},\hat {\bf r}} \right) = \arg \min_{\bm{{\alpha}}_A, \bm{\theta},\bf{r}} {\left\| {\mathbf{Y}}_A^{'} - \sum\limits_{l \in {\cal T}}\mathbf{H}_A^l \left( {{\bm{\alpha }_A},{\bm{\theta }},{\mathbf{r}}} \right) \mathbf{X}\right\|_F^2}.
	\label{MLE}
\end{equation}
}
However, the complexity of MLE is high, especially for the multiple targets scenario. To reduce complexity, the suboptimal Relax-MLE method proposed in \cite{RelaxMLE} is considered. Interested readers may refer to Section IV of \cite{RelaxMLE} for further details. {\color{black}It's noteworthy that the multi-target Relax-MLE requires knowledge of the transmitted signal, and therefore can only be employed by Alice.}
\vspace{-3mm}

\section{Performance Analysis}
In this section, we first introduce the KLD from an information-theoretic perspective to evaluate the sensing performance of both the legitimate radar Alice and the Eve. Specifically, we define a KLD gap between the targets and the scatterers for Eve to assess the effectiveness of the proposed LD scheme. Subsequently, we also define two MSE computation methods to evaluate the Eve's sensing performance under both actual and worst-case scenarios.



\subsection{Kullback–Leibler Divergence}
The KLD, also known as the relative entropy, measures the difference between a pair of random probability density functions (PDFs) $f_0\left(x\right)$ and $f_1\left(x\right)$, defined as
\begin{equation}
	D\left(f_0\parallel f_1\right)=\int_{-\infty}^{\infty}f_0\left(x\right)\log\left(\frac{f_0\left(x\right)}{f_1\left(x\right)}\right)dx.
\end{equation}
Existing works \cite{KLD1,KLD2} demonstrated that the KLD generalizes the concept of the mutual information and is also highly related to the detection probability. Let's first recall the received signals at Alice and Eve from the $l$th reflector as follows
\begin{equation}
\mathbf{Y}_A^{l} = \mathbf{H}_A^{l \; {\rm H}} \mathbf{X} + \mathbf{N}_A, \; l \in \mathcal{T}
\end{equation}
\begin{equation}
	\mathbf{Y}_E^{l} = \mathbf{H}_E^{l \; {\rm H}} \mathbf{X} + \mathbf{N}_E, l \; \in \mathcal{T} \cup \mathcal{C}
\end{equation}
Ignoring the subscripts, both Alice and Eve can formulate the hypothesis testing problem as follows
\begin{equation}
\left\{ \begin{gathered}
	{\mathcal{H}_0}:{\tilde{y}^l} = {{\mathbf{r}}^{l\;\text{H}}}{\text{vec}}\left( {\mathbf{N}} \right), \hfill \\
	{\mathcal{H}_1}:{y^l} = {{\mathbf{r}}^{l\;\text{H}}}{\text{vec}}\left( {{{\mathbf{H}}^{l\;{\kern 1pt} {\text{H}}}}{\mathbf{X}} + {\mathbf{N}}} \right), \hfill \\ 
\end{gathered}  \right.
\label{hypo}
\end{equation}
where $\mathbf{r}^{l\;{\rm H}}$ is the received beamforming vector for the $l$th reflector, given by ${\mathbf{r}}_A^l = {{\mathbf{1}}_S} \otimes {\mathbf{a}}_{{N_r}}^{}\left( {\theta _R^l,r_R^l} \right)$ at Alice's side and ${\mathbf{r}}_E^l = {{\mathbf{1}}_S} \otimes {\mathbf{a}}_{{N_e}}^{}\left( {\tilde \theta _R^l,\tilde r_R^l} \right)$ at Eve's side. It is straightforward to see that all the results in (\ref{hypo}) follow a complex Gaussian with zero mean but different variances as ${y}_A^l \sim {\cal CN}\left(0,\sigma_A^{l\;2}\right)$, ${\tilde{y}}_A^{l\;2} \sim {\cal CN}\left(0,\tilde{\sigma}_A^{l\;2}\right)$, ${y}_E^l \sim {\cal CN}\left(0,\sigma_E^{l\;2}\right)$, ${\tilde{y}}_E^l \sim {\cal CN}\left(0,\tilde{\sigma}_E^{l\;2}\right)$, where $\tilde{\sigma}_A^{l\;2} = S N_r\sigma _{\text{n}}^2$, $\sigma_A^2= S{\mathbf{r}}_A^{l\;\text{H}} {{\mathbf{H}}_A^{l\;{\kern 1pt} {\text{H}}}} {{\mathbf{R}}_X} {{\mathbf{H}}_A^l} {\mathbf{r}}_A^{l} + SN_r\sigma _{\text{n}}^{2}$, $\tilde{\sigma}_E^{l\;2} =  S N_e\sigma _{\text{n}}^2$, and $\sigma_E^{l\;2} = S{\mathbf{r}}_E^{l\;\text{H}}{{\mathbf{P}}_E} {{\mathbf{H}}_E^{l\;{\kern 1pt} {\text{H}}}} {{\mathbf{R}}_X} {{\mathbf{H}}_E^l} {\mathbf{P}}_E^{\text{H}}{\mathbf{r}}_E^{l} + S N_e\sigma _{\text{n}}^2$.
Therefore, the KLD between the two hypothesis testing observations at both Alice and Eve can be expressed as follows
\begin{equation}
{D^{l}_{A/E}} = \frac{1}{{\ln 2}}\left( {\frac{{\tilde \sigma _{A/E}^{l\;2}}}{{\sigma _{A/E}^{l\;2}}} - 1 + \ln \frac{{\sigma _{A/E}^{l\;2}}}{{\tilde \sigma _{A/E}^{l\;2}}}} \right).
\end{equation}
Considering all the targets at Alice, the averaged KLD of the targets at Alice is $D^{\rm ave}_A = \frac{1}{L_T}\sum_{l \in {\cal T}} D^l_A$. For Eve, the averaged KLD of targets and scatterers is $D^{\rm ave}_E = \frac{1}{L_T}\sum_{l \in {\cal T}} D^l_E$ and $\tilde{D}^{\rm ave}_E = \frac{1}{L_C}\sum_{l \in {\cal C}} D^l_E$, respectively. To quantify the negative impact of scatterers on target detection at Eve, the gap between targets and scatterers KLD at Eve is defined as
\begin{equation}
D_E^{\rm gap} =D^{\rm ave}_E  - \tilde{D}^{\rm ave}_E.
\end{equation}

\subsection{Root MSE and Root CRB}
In addition to the KLD, we use root MSE (RMSE) and root CRB (RCRB) to evaluate the sensing secrecy from the perspective of location estimation error. 

\subsubsection{RMSE}
{\color{black}First, the RMSE between the real location parameters $\{\bm{\theta},\mathbf{r}\}=\{\theta_R^1,\,\dots,\,\theta_R^{L_T},\,r_R^1,\,\dots,\,r_R^{L_T}\}$ and the corresponding estimation results $\{\hat{\bm{\theta}},\hat{\mathbf{r}}\}=\{\hat{\theta}_R^1,\,\dots,\,\hat{\theta}_R^{\hat{L}_T},\,\hat{r}_R^1,\,\dots,\,\hat{r}_R^{\hat{L}_T}\}$ can be formulated as}
\begin{equation}
{\rm{RMS}}{{\rm{E}}_{{\rm{ang}}}}{\rm{ = }}\sqrt { \frac{1}{S}\sum\limits_{s= 1}^S {\left\| {{\bm{\hat \theta }}\left( {\cal E}_1 \right) - {\bm{\theta }}}\left({\cal E}_2\right) \right\|_2^2} } ,
\end{equation}
\begin{equation}
{\rm{RMS}}{{\rm{E}}_{{\rm{dis}}}}{\rm{ = }}\sqrt {\frac{1}{S}\sum\limits_{s = 1}^S {\left\| {{\bf{\hat r}}\left( {\cal E}_1 \right) - {\bf{r}}}\left({\cal E}_2\right) \right\|_2^2} },
\end{equation}
where $\mathcal{E}_1$ and $\mathcal{E}_2$ are index sets used to select $L_{\rm MSE} = \min (L_T,\,\hat{L}_T)$ results from the estimated set $\{\hat{\bm{\theta}},\hat{\mathbf{r}}\}$ and the ground truth one $\{\bm{\theta},\mathbf{r}\}$ for RMSE calculation. For Alice, the estimated number of reflectors equals the real one, thus $\mathcal{E}^A_1  = \mathcal{E}^A_2 = \left\{1,\;2,\;\cdots,\;L_T\right\}$. For Eve, the estimated number of reflectors may not match that of the targets (we will show that in the simulation results part later).  {\color{black}Then, we define $\mathcal{E}_1^{E,{\rm min}}$ and $\mathcal{E}_2^{E,{\rm min}}$  as the index sets at Eve's side that minimizes the RMSE based on the accurate ground truth of the targets' locations,} and let $\mathcal{E}_1^{E,{\rm rand}}$ and $\mathcal{E}_2^{E,{\rm rand}}$ represent random selection index sets. We use the former two index sets to represent the Eve's sensing performance in the {\color{black}hypothetical} worst-case scenario, while the latter two random sets denote Eve's performance when it fails to distinguish between the targets and the scatterers in the actual scenario.

\subsubsection{RCRB}
The CRB, as the lower bound of the MSE, is analyzed here to evaluate the lower bound of sensing performance. For Alice, the parameters to be estimated are arranged as $\bm{\zeta}_A = \left\{\bm{\theta}_R,\; {\bf r}_R,\; \Re\left(\bm{{\alpha}}_A\right),\;\Im\left(\bm{{\tilde{\alpha}}}_A\right)\right\} $. Since the observation ${\rm vec}\left(\mathbf{Y}_A^{'}\right)$ is a Gaussian vector with mean of ${\bm \mu}\left({
\bm \zeta}_A\right)= {\rm vec}\left[{\sum\limits_{{l} \in {{\cal T}}}^{} {{\bf{H}}_{{A}}^{{l} \;{\rm{H}}}{\bf{X}}} }\right]$ and variance $\mathbf{R}\left({\bm \zeta}_A\right) = \sigma_{\rm{n}}^2\mathbf{I}$, the Fisher information matrix (FIM) \cite{CRB} can be expressed as
\begin{equation}
\begin{aligned}
&{\left[ {{\rm{FIM}}({\bm{\zeta }_A})} \right]_{p,q}} = \frac{2}{\sigma _{\rm{n}}^2} \Re \left[\frac{{\partial {\bm{\mu }}{{\left( {\bm{\zeta }_A} \right)}^{\rm{H}}}}}{{\partial {{\left[ {\bm{\zeta }_A} \right]}_p}}}\frac{{\partial {\bm{\mu }}\left( {\bm{\zeta }_A} \right)}}{{\partial {{\left[ {\bm{\zeta }_A} \right]}_q}}}\right]\\
&= \frac{{2S}}{{\sigma _{\rm{n}}^2}}\Re \left\{ {{\rm{Tr}}\left[ {\frac{{\partial \sum\limits_{l \in {\cal T}} {\bf{H}}_A^{l \;\rm{H}}}}{{\partial {{\left[ {{{\bm \zeta} _A}} \right]}_p}}}{{\bf{R}}_X}\frac{{\partial \sum\limits_{l \in {\cal T}} {\bf{H}}_A^{l}}}{{\partial {{\left[ {{{\bm \zeta} _A}} \right]}_q}}}} \right]} \right\},\;
\end{aligned}
\label{FIMA}
\end{equation}
The results of (\ref{FIMA}) for different parameters in $\bm{\zeta}_A$ are provided as follows
\begin{equation}
	\begin{aligned}
		&{\left[ {\frac{\partial \mathbf{H}_A^{l\;{\rm H}}}{{\partial {\theta _R^l}}}} \right]_{p,q}} = - {\left[\mathbf{H}_A^{l\; {\rm H}} \right]_{p,q}} \times  \\
		&\frac{{  j2\pi }}{\lambda }\left[ {\left( {\tilde{p} - \tilde{q}} \right)d\sin {\theta _R^l} + \frac{{\left( {{\tilde{p}^2} - {\tilde{q}^2}} \right){d^2}\sin {\theta _R^l}\cos {\theta _R^l}}}{{{r_R^l}}}} \right],
	\end{aligned}
	\label{Jac1}
\end{equation}
\begin{equation}
	{\left[ {\frac{\partial \mathbf{H}_A^{l \; {\rm H}}}{{\partial {r_R^l}}}} \right]_{p,q}} = {\left[\mathbf{H}_A^{{\rm H}} \right]_{p,q}} \times \frac{{j\pi \left( {{\tilde{p}^2} - {\tilde{q}^2}} \right){d^2}{{\sin }^2}{\theta _R^l}}}{{\lambda r_R^{l\;2}}},
	\label{Jac2}
\end{equation}
\begin{equation}
	\frac{{\partial {\bf{H}}_A^{l\;{\rm{H}}}}}{{\partial \Re \left( {\tilde \alpha _{\rm{A}}^{l\:}} \right)}} = {\bf{H}}_A^{l\;{\kern 1pt} {\rm{H}}}/\tilde \alpha _{\rm{A}}^{l\;{\kern 1pt} *\:},
	\label{Jac3}
\end{equation}
\begin{equation}
\frac{{\partial {\bf{H}}_A^{l\; {\rm{H}}}}}{{\partial \Im \left( {\tilde \alpha _{\rm{A}}^{l\:}} \right)}} =  - j{\bf{H}}_A^{l\;{\kern 1pt} {\rm{H}}}/\tilde \alpha _{\rm{A}}^{l\;{\kern 1pt} *\:},
	\label{Jac4}
\end{equation}
where $\tilde{p} = -\left(N_r - 1\right)/2 + p -1$ and $\tilde{q} = -\left(N_t - 1\right)/2 + q -1$. Thus, the CRB matrix is obtained as ${\rm CRB}\left({\bm \zeta}_A\right) = {\rm FIM}^{-1}\left({\bm \zeta}_A\right)$, and the RCRBs of different parameters are obtained by calculating the traces of different submatrices in ${\rm CRB}\left({\bm \zeta}_A\right) $.

For Eve, all the parameters to be estimated are arranged as $\bm{\zeta}_E = \left\{\bm{\tilde{\theta}}_R,\; {\bf \tilde{r}}_R,\; \Re\left(\bm{\tilde{\alpha}}_E\right),\;\Im\left(\bm{\tilde{\alpha}}_E\right)\right\} $. Unlike Alice's CRB calculation, which considers only the targets, Eve's CRB calculation needs to include both the targets and scatterers. The observation ${\rm vec}\left(\mathbf{Y}_E\right)$ is also a Gaussian vector but with zero mean and variance $\mathbf{R}^{'}\left({\bm \zeta}_E\right) = \mathbf{I}_S \otimes \mathbf{R}\left({\bm \zeta}_E\right)$, where $ \mathbf{R}\left({\bm \zeta}_E\right) = {\bf{H}}_E^{' \; \rm{H}}{{\bf{R}}_X}{{\bf{H}}_E^{'}}+ \sigma_{\rm{n}}^2\mathbf{P}_E\mathbf{P}_E^{\rm H}$ with $\mathbf{H}_E^{'} =\mathbf{H}_E \mathbf{P}_E^{\rm H}$. Thus, the FIM at Eve's side can be formulated as
\begin{equation}
\begin{aligned}
&{\left[ {{\rm{FIM}}({\bm{\zeta }_E})} \right]_{p,q}} = \\
&S {\rm Tr} \left[ {\bf R}^{-1} {{\left( {\bm{\zeta }_E} \right)}} \frac{{\partial {\bf{R }}{{\left( {\bm{\zeta }_E} \right)}}}}{{\partial {{\left[ {\bm{\zeta }_E} \right]}_p}}} {\bf R}^{-1} {{\left( {\bm{\zeta }_E} \right)}} \frac{{\partial {\bf{R }}\left( {\bm{\zeta }_E} \right)}}{{\partial {{\left[ {\bm{\zeta }_E} \right]}_q}}} \right],
\end{aligned}
\label{FIME}
\end{equation}
where 
\begin{equation}
\frac{{\partial {\bf{R}}\left( {{{\bm{\zeta }}_E}} \right)}}{{\partial {{\left[ {{{\bm{\zeta }}_E}} \right]}_p}}} =\frac{{\partial {\bf{H}}_E^{'\;\rm{H}}}}{{\partial {{\left[ {{{\bm{\zeta }}_E}} \right]}_p}}}{{\bf{R}}_X}{{\bf{H}}_E^{'}}+ {\bf{H}}_E^{' \; \rm{H}}{{\bf{R}}_X}\frac{{\partial {{\bf{H}}_E^{'}}}}{{\partial {{\left[ {{{\bm{\zeta }}_E}} \right]}_p}}} .
\end{equation}
Similarly, for different parameters in $\bm{\zeta}_E$, we have
\begin{equation}
\begin{aligned}
&{\left[ {\frac{{\partial {\mathbf{H}}_E^{'\;{\text{H}}}}}{{\partial \tilde \theta _R^l}}} \right]_{p,q}} = {\left[\mathbf{H}_E^{l} \mathbf{P}_E^{\rm H} \right]^{\rm H}_{p,q}} \times \\
&- j\frac{{2\pi }}{\lambda }\left( {\tilde{p}d\sin \tilde \theta _R^l + \frac{{{\tilde{p}^2}{d^2}}}{{\tilde r_R^l}}\sin \tilde \theta _R^l\cos \tilde \theta _R^l} \right),
\end{aligned}
\label{Jac5}
\end{equation}
\begin{equation}
{\left[ {\frac{{\partial {\mathbf{H}}_E^{'\; {\text{H}}}}}{{\partial \tilde r_R^l}}} \right]_{p,q}} = {\left[\mathbf{H}_E^{l} \mathbf{P}_E^{\rm H} \right]^{\rm H}_{p,q}} \times j\frac{{\pi }}{\lambda }{\text{ }} {  \frac{{{\tilde{p}^2}{d^2}{{\sin }^2}\tilde \theta _R^l}}{{{{(\tilde r_R^l)}^2}}}} ,
\label{Jac6}
\end{equation}
\begin{equation}
\frac{{\partial {\mathbf{H}}_E^{'\; {\rm H}}}}{{\partial \Re (\tilde \alpha _E^l)}} = \mathbf{P}_E {\mathbf{H}}_E^{l\;{\kern 1pt} {\text{H}}}/\tilde \alpha _E^{l\;{\kern 1pt} *},
\label{Jac7}
\end{equation}
\begin{equation}
\frac{{\partial {\mathbf{H}}_E^H}}{{\partial \Im (\tilde \alpha _E^l)}} =  - j\mathbf{P}_E{\mathbf{H}}_E^{l\;{\kern 1pt} {\text{H}}}/\tilde \alpha _E^{l\;{\kern 1pt} *},
\label{Jac8}
\end{equation}
where  $\tilde{p} = -\left(N_e - 1\right)/2 + p -1$. Finally, Eve's CRB matrix is given by ${\rm CRB}\left({\bm \zeta}_E\right) = {\rm FIM}^{-1}\left({\bm \zeta}_E\right)$, and the RCRB can be obtained accordingly.
In the subsequent simulations, we will demonstrate that Eve achieves a lower RCRB for scatterers than targets under the proposed LD scheme. However CRB only indicates the estimation performance for targets and scatterers separately and cannot capture the confounding influence caused by the scatterers in our Eve deception approach.

\section{Simulation Results}
In this section, simulation results are presented to validate the effectiveness of the proposed LD scheme. Some simulation parameters are listed in Table \ref{SimSetup}, while the location information of the transmitter, receivers, reflectors is plotted in Fig. \ref{locinfo}, {\color{black}and the markers in this figure will be applied throughout this section to represent the receivers and reflectors}. The tradeoff weight is defined as $[w_c,\,w_{ss},\,w_s]$. Other unspecified parameters will be noted in the figure captions.
\renewcommand{\arraystretch}{1.1}
\begin{table}[htbp]
	\caption{Simulation parameters setup.}
	\centering
	\begin{threeparttable}
		\begin{tabular}{*{3}{>{\centering\arraybackslash}m{2cm}>{\centering\arraybackslash}m{1.5cm}>{\centering\arraybackslash}m{4cm}}} 
			\Xhline{2pt}
			\textbf{Parameter} & \textbf{Value} & \textbf{Description} \\
			\Xhline{1pt}
			$f_c$ & 28 GHz & Carrier frequency \\
			$N_t$ &  129  & Transmit antenna number  \\
			$N_{rf}$ & 20 & RF chain number \\
			$K_c$ & 2 & Communication user number\\
			$K_r$ & 1 & Radar signal dimension \\
			$L_T$ & 3 & Target number \\
			$L_C$ & 2 & Scatter number\\
			$N_r$ &  65 &   Receive antenna number of Alice\\ 
			$N_e$ &  65 &  Receive antenna number of Eve \\ 
			$d = \lambda/2$ & 0.0054 m &  Element spacing\\
			$\kappa$ & 0.8 & Power allocation between communication and sensing\\
			$S$ & 100 & Snapshot number\\
			\Xhline{2pt}
		\end{tabular}%
	\end{threeparttable}
	\label{SimSetup}%
\end{table}%

\begin{figure}[t]
	\centering
	\includegraphics[scale=1.0]{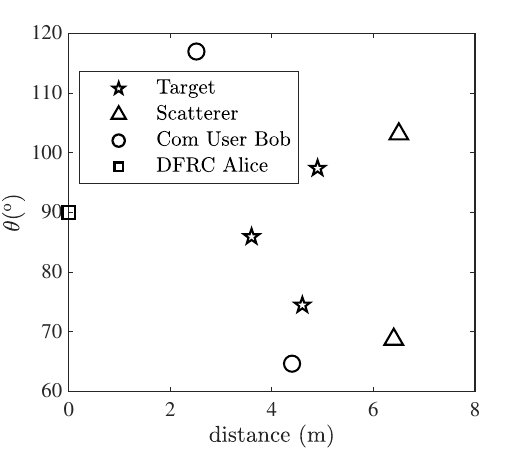}
	\caption{System location topology diagram.}
	\label{locinfo}
\end{figure}

\begin{figure}[!ht]
	\centering
	\includegraphics[scale=1.0]{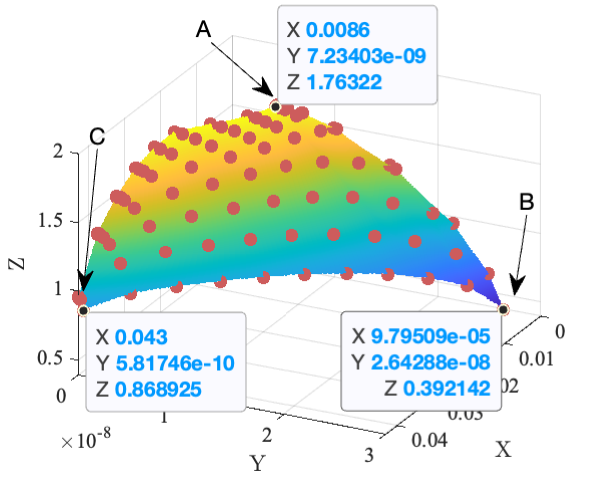}
	\caption{Three-way performance tradeoff vs. different optimization weights when SNR = 20dB (X-axes: Alice radar SINR, Y-axes: minimum power to scatterers, Z-axes: sum rate (bits/Hz/s)).}
	\label{Tradeoff}
\end{figure}

First, the tradeoff surface between the sum rate, Alice's radar SINR, and the minimal power allocated to scatterers is portrayed in Fig. \ref{Tradeoff}. Different from the conventional ISAC systems only involving a tradeoff between communication and sensing, our system exhibits a more complex three-way tradeoff relationship among communication, sensing, and security. Overall, these objectives are inherently contradictory. By analyzing the three endpoints (A, B, and C) on the tradeoff surface, we can draw several interesting conclusions. At point A, where optimal communication performance is obtained, Alice's radar SINR and the power allocated to the scatterers are not entirely zero, indicating that spatial reflectors are useful for enhancing the sum rate. At point B,  corresponding to maximal sensing security, the rate is about 0.39 while Alice's radar SINR is zero, demonstrating a strict conflict between sensing security and Alice's radar sensing performance. A similar conclusion can be drawn based on point C, emphasizing the inherent conflict between sensing performance and sensing security. Therefore, in a practical scenario, one may select an appropriate point on the tradeoff surface based on the specific scenario requirements.

To more intuitively demonstrate the beamforming effects of the proposed scheme at different compromise points, we present the transmit beampatterns at different tradeoff points in Fig. \ref{TxBP}. Fig. \ref{TxBP} (a) depicts a sensing-dominant scenario, where most power is allocated to the targets, with the farther two targets being allocated more energy due to their larger round-trip path losses. Fig. \ref{TxBP} (b) shows the result under a communication-dominant scenario, where most of the energy is allocated to the LoS path of Bob. However, reflectors also receive a portion of the power due to their contribution in enhancing spatial diversity gain \cite{FofWirel}. A Sensing-security dominant scenario is shown in Fig. \ref{TxBP} (c), where most power is allocated to scatterers.
Finally, Fig. \ref{TxBP} (d) illustrates a compromise scenario that considers all requirements, demonstrating that the proposed scheme is capable of striking a flexible tradeoff performance.

\begin{figure*}[!htb]
	\centering
	\begin{minipage}{0.24\textwidth}
		\centering
	\includegraphics[scale=0.8]{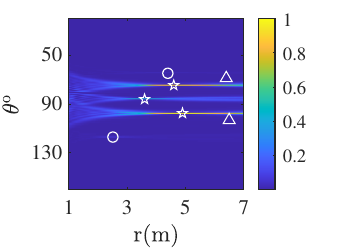}
		\subcaption*{(a)}
		\vspace{-1mm}
	\end{minipage}
	\hfill
	\begin{minipage}{0.24\textwidth}
		\centering
	\includegraphics[scale=0.8]{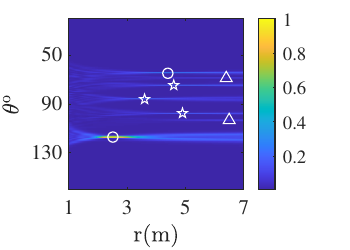}
		\subcaption*{(b)}
		\vspace{-1mm}
	\end{minipage}
	\hfill
	\begin{minipage}{0.24\textwidth}
		\centering
	\includegraphics[scale=0.8]{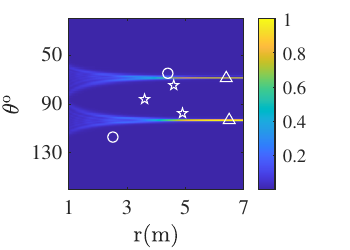}
		\subcaption*{(c)}
		\vspace{-1mm}
	\end{minipage}
		\hfill
	\begin{minipage}{0.24\textwidth}
		\centering
	\includegraphics[scale=0.8]{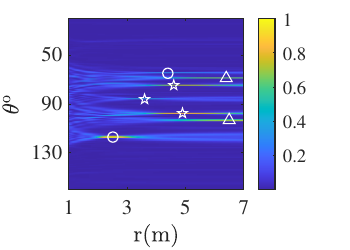}
		\subcaption*{(d)}
		\vspace{-1mm}
	\end{minipage}
	\caption{Transmit beampatterns when SNR = 20 dB: (a) sensing dominant; (b) communication dominant; (c) sensing-security dominant; (d) all considered. (circle: communication users, star: targets, triangle: scatterers)}\label{TxBP}
	\vspace{-3mm}
\end{figure*}

\setcounter{figure}{5}
\begin{figure*}[!t]
	\centering
	\begin{minipage}{0.24\textwidth}
		\includegraphics[scale=0.8]{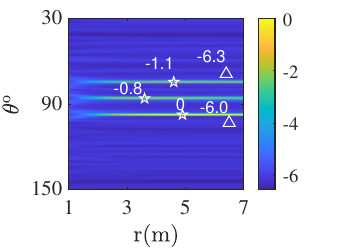}
		\subcaption*{(a)}
		\vspace{-1mm}
	\end{minipage}
	\hfill
	\begin{minipage}{0.24\textwidth}
		\includegraphics[scale=0.8]{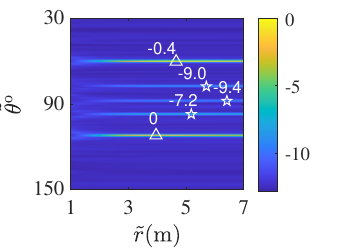}
		\subcaption*{(b)}
		\vspace{-1mm}
	\end{minipage}
	\begin{minipage}{0.24\textwidth}
		\includegraphics[scale=0.8]{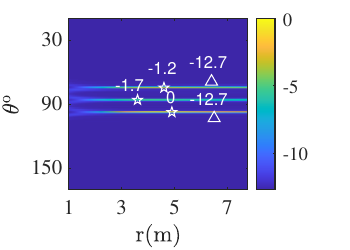}
		\subcaption*{(c)}
		\vspace{-1mm}
	\end{minipage}
	\hfill
	\begin{minipage}{0.24\textwidth}
		\includegraphics[scale=0.8]{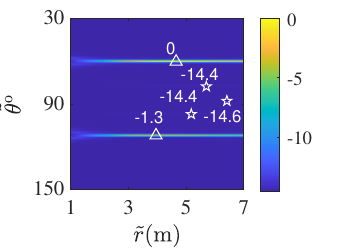}
		\subcaption*{(d)}
		\vspace{-1mm}
	\end{minipage}
	\caption{Spatial (pseudo) spectrum in dB form when SNR = 30 dB, $[w_c,\,w_{ss},\,w_s] = [0.3,\,0.7,\,0.1]$, and Eve's location $\left(90^{\rm o},\,10 {\rm m}\right)$: (a) Alice Capon; (b) Eve Capon; (c) Alice MUSIC; (d) Eve MUSIC.  (star: targets, triangle: scatterers)}\label{SpatialSpectrum}
	\vspace{-3mm}
\end{figure*}

\setcounter{figure}{4}
\begin{figure}[!t]
	\centering
	\includegraphics[scale=1.0]{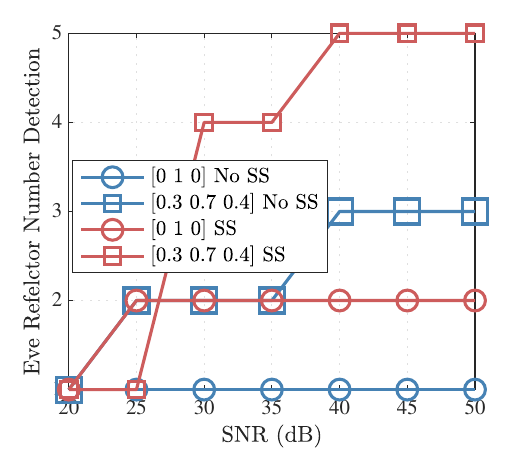}
	\caption{Estimated number of reflectors by Eve $\left(90^{\rm o},\,10 \,{\rm m}\right)$ vs. SNR under different tradeoff weight $[w_c,\,w_{ss},\,w_s]$.}\label{DetNum}
\end{figure}

\setcounter{figure}{6}
\begin{figure}[!htb]
	\centering
	\includegraphics[scale=1.0]{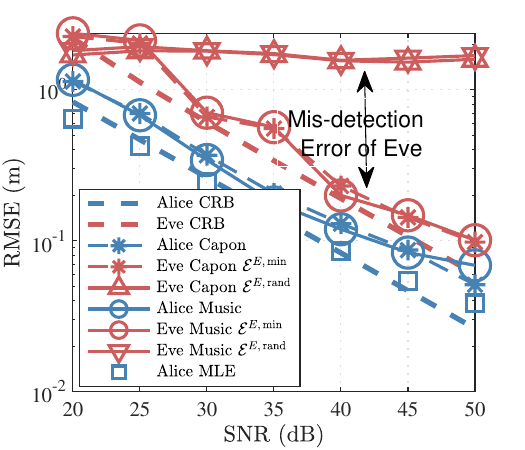}
	\caption{RMSE of range parameters vs. SNR with Eve' location $\left(90^{\rm o},\;10 {\rm m}\right)$ and $[w_c,\,w_{ss},\,w_s] = [0.3,\,0.7,\,0.4]$.}
	\label{MSECRBDis}
\end{figure}

Since the optimized sensing performance metrics are indirect, we will next validate the effectiveness of the proposed scheme using direct indicators of receiver-side sensing performance, i.e. spatial spectrum, MSE, and CRB. {\color{black}First, we analyze the Eve's performance in estimating the number of reflectors in Fig. \ref{DetNum} under different tradeoff weights. Firstly, we consider an Eve who does not employ the SS technique for decorrelation. When the tradeoff weight equals $[0,\,1\,,0]$, i.e., all power is allocated to the two scatterers and communication signal is zero, the reflector estimation remains constant at 1 as SNR varies. This result is due to the rank-1 constraint of radar covariance matrix. In contrast, when all reflectors are allocated power, i.e., $[w_c,\,w_{ss},\,w_s] = [0.3,\,0.7,\,0.4]$,  Eve's estimation of the number of reflectors is limited to a maximum of 3 in the high SNR region, rather than the actual number of 5. This discrepancy is due to the rank-3 limitation of the transmit covariance matrix. For a stronger Eve that can employ SS \cite{NFSS}, the estimation performance is improved significantly for both the two tradeoff weights. Particularly when both targets and scatterers are allocated energy, the estimation results reach 4 or even the actual 5. It is noteworthy that in this scenario, as the SNR and the snapshots number increase, Eve is ultimately able to accurately estimate the correct number of reflectors.}

In Fig. \ref{SpatialSpectrum}, we present the spatial (pseudo) spectra of Alice and Eve using the SS-assisted Capon and SS-assisted MUSIC methods, with an SNR of 30 dB and optimization weights of $[w_c,\;w_{ss},\;w_s]= [0.3,\;0.7,\;0.1]$. Initially, examining the spatial spectra obtained from the Capon estimator in Figs. \ref{SpatialSpectrum} (a) and (b), it is evident that Alice can effectively eliminate interference from the known scatterers \cite{InterCancel}, resulting in distinct peaks only at the targets' locations. In contrast, Eve exhibits very sharp peaks at the scatterers' locations, while the peaks at the locations of the target are not as prominent, {\color{black}which means that Eve achieves better sensing performance at scatterers than at targets and is more easily mistaken scatterers as real targets.} Unlike the energy spectrum of Capon, Figs. \ref{SpatialSpectrum} (c) and (d) depict the MUSIC pseudo-spectrum, which characterizes the orthogonality between the signal subspace and the noise subspace. For Alice, the overall trend is similar to that of Capon. However, for Eve, due to the incorrect estimation of the number of reflectors, the three smaller eigenvectors corresponding to the targets are mistakenly classified into the noise subspace, resulting in the complete absence of all targets in the MUSIC pseudo-spectrum. {\color{black}This means that targets are completely undetectable by Eve, thereby achieving absolute sensing security.}

\begin{figure}[!t]
	\centering
	\includegraphics[scale=1.0]{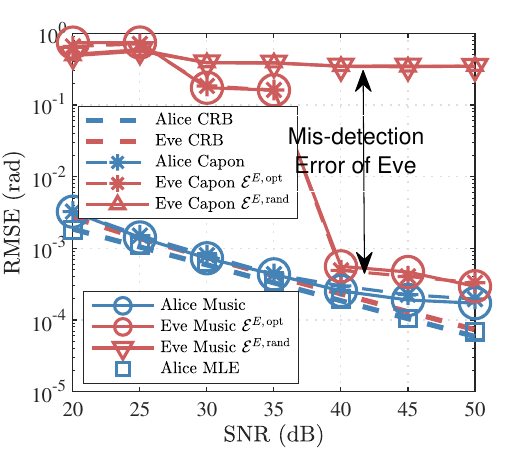}
	\caption{RMSE of angle parameters versus SNR with Eve' location $\left(90^{\rm o},\;10 {\rm m}\right)$ and $[w_c,\,w_{ss},\,w_s] = [0.3,\,0.7,\,0.4]$.}
	\label{MSECRBAng}
\end{figure}

\begin{figure}[!t]
	\centering
	\includegraphics[scale=1.0]{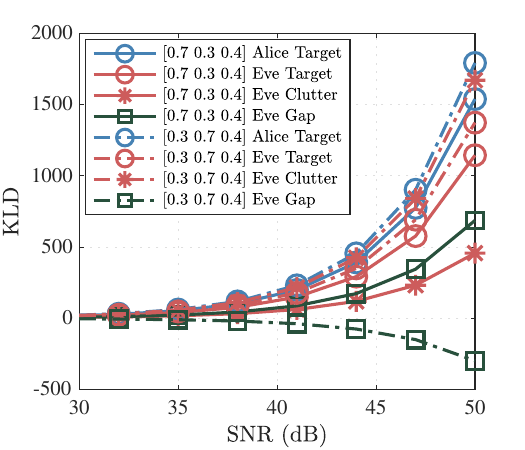}
	\caption{KLD at both Alice and Eve $\left(90^{\rm o},\,10 \,{\rm m}\right)$ versus SNR and tradeoff weight.}
	\label{KLD}
\end{figure}

\begin{figure}[!t]
	\centering
	\includegraphics[scale=1.0]{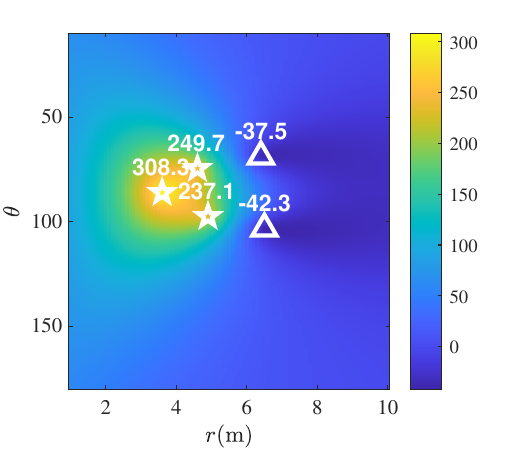}
	\caption{KLD gap of Eve versus its locations with SNR = 40 dB and $[w_c,\,w_{ss},\,w_s] = [0.3,\,0.7,\,0.4]$. (star: targets, triangle: scatterers)}
	\label{KLDvsEveLoC}
\end{figure}

Furthermore, the RCRB and RMSE versus SNR for the targets' distance and angle parameters for both Alice and Eve are plotted in Figs. \ref{MSECRBDis} and \ref{MSECRBAng} respectively with optimization weight $[w_c,\,w_{ss},\,w_s] = [0.3,\,0.7,\,0.4]$. Firstly, it is observed that the RCRB and RMSE for angle parameters are better than those for distance, owing to the lower  distance resolution in near-field especially when the antenna number of transceiver are not efficient \cite{NFChanEst}. Subsequently, by comparing the RMSE across different estimators, we observe that the MLE demonstrates the best performance, followed by MUSIC and Capon. Comparing the estimation performance between Alice and Eve, we find that their RCRB levels are close. However, when Eve employs random index $\mathcal{E}_E^{\rm rand}$, i.e., Eve is unable to distinguish between scatterers and targets, its RMSE decreases upon increasing SNR initially but then maintains at a higher error floor. In contrast, Alice's RMSE continuously decreases as SNR increases and nearly approaches the RCRB. When Eve uses ground-truth-based index  $\mathcal{E}_E^{\rm min}$, which allows it to distinguish between scatterers and targets, its RMSE initially decreases, then stabilizes, and finally reduces further to reach the RCRB as SNR increases. This trend occurs because Eve's estimated number of reflectors gradually improves until it can accurately estimate all reflectors. Therefore, when SNR is low, part of the targets remain undetected, providing absolute sensing security. However, in higher SNR regions where Eve can detect all reflector positions, sensing security stems from the confusion effect created by the scatterers. 


Finally, the KLD performance is given in Fig. \ref{KLD} to illustrate the effect of the proposed LD scheme at Eve. Specifically, we select two optimization weights  [0.7, 0.3, 0.4] and  [0.3, 0.7, 0.4], where the Alice radar SINR weight is fixed. 
Increasing the weight of the power toward scatterers from 0.3 to 0.7 results in an increase in the KLD at the scatterers, which surpasses the KLD at the targets and leads to a negative growth in the KLD gap. This indicates that the scatterers are more easily detectable compared to the targets under the influence of the proposed LD scheme. This will lead Eve to mis-detect scatterers as targets of interest.

Additionally, we observe that the impact of the proposed LD scheme on Eve varies with its location. Therefore, we present a heatmap of the Eve KLD gap versus its locations in Fig. \ref{KLDvsEveLoC}, where one can clearly see that the LD scheme performs poorly when Eve is close to the targets, while its sensing-security performance improves significantly when Eve is away from the targets and close to the scatterers. This conclusion provides valuable guidance for our future work, such as scatterers selection or RIS/relay deployment to further disrupt Eve.

\section{Competing Interests}
LD declares a relevant patent application: United Kingdom Patent Application No. 2511028.9.

\section{Conclusion}
In this paper, we introduce an LD scheme for NF-ISAC systems that utilizes the known scatterers to enhance the sensing security of the targets' location information while guaranteeing communication service. By designing and solving an optimization problem that increases the power towards the scatterers and constrains the rank of the transmit covariance matrix, we aim to deceive potential Eves in mis-detecting scatterers as targets of interest. This significantly suppresses Eve's sensing performance. Simulation results demonstrate that the proposed scheme achieves a flexible three-way tradeoff among communication, sensing, and sensing-security. Furthermore, the proposed LD scheme can lead to false or even missed detection of actual targets by Eve.

\vspace{-2mm}
\appendices
\section{Proof of Lemma \ref{lemma1}}\label{AppdixA}
Let's first assume a feasible solution $\mathbf{R}_X = \sum\limits_{k = 1}^{K_c} {{\mathbf{R}}_c^{k}}  + {\mathbf{R}}_r$, where the radar signal covariance is not zero. Then, we can always find another feasible solution ${\mathbf{\tilde R}}_X  = \sum\limits_{k = 1}^{K_c} {{\mathbf{\tilde R}}_c^{k\; }} $, where ${\mathbf{\tilde R}}_c^{k } = {\mathbf{R}}_c^{k } + \frac{1}{K}{\mathbf{R}}_r $. It is straightforward to verify that the overall transmit covariance matrix remains unchanged, i.e., ${\mathbf{\tilde R}}_X = {\mathbf{R}}_X$. Therefore, all constraints in problem $\mathcal{P}_2$ remain satisfied. On the other hand, each term in the objective function becomes
\begin{equation}
\begin{gathered}
	\frac{{{\text{Tr}}\left( {{\mathbf{\tilde R}}_c^{k }{\mathbf{H}}_B^k} \right)}}{{\sum\limits_{j \ne k} {{\text{Tr}}\left( {{\mathbf{\tilde R}}_c^{j }{\mathbf{H}}_B^k} \right) + \sigma _{\text{n}}^2} }} \hfill \\
	= \frac{{{\text{Tr}}\left( {{\mathbf{R}}_c^{k }{\mathbf{H}}_B^k} \right) + \frac{1}{K}{\text{Tr}}\left( {{\mathbf{R}}_r {\mathbf{H}}_B^k} \right)}}{{\sum\limits_{j \ne k} {{\text{Tr}}\left( {{\mathbf{R}}_c^{j }{\mathbf{H}}_B^k} \right) + \frac{{K - 1}}{K}{\text{Tr}}\left( {{\mathbf{R}}_r {\mathbf{H}}_B^k} \right) + \sigma _{\text{n}}^2} }} \hfill \\
	> \frac{{{\text{Tr}}\left( {{\mathbf{R}}_c^{k }{\mathbf{H}}_B^k} \right)}}{{\sum\limits_{j \ne k} {{\text{Tr}}\left( {{\mathbf{R}}_c^{j }{\mathbf{H}}_B^k} \right) + {\text{Tr}}\left( {{\mathbf{R}}_r {\mathbf{H}}_B^k} \right) + \sigma _{\text{n}}^2} }}, \hfill \\ 
\end{gathered} 
\end{equation}
where $\mathbf{H}_B^k = \mathbf{h}_B^k \mathbf{h}_B^{k\; {\rm H}}$. Thus, the proof is completed.

\begin{figure*}[!htbp]
	\small
	\begin{equation}
		\begin{gathered}
			L\left( {{\mathbf{R}}_c^k,\delta ,{\mu _{{l_c}}},{\lambda _{{l_t}}},\kappa ,{\mathbf{\Gamma }}_c^k} \right) = w\sum\limits_{k = 1}^K {{{\log }_2}\left( {1 + \frac{{{\text{Tr}}\left( {{\mathbf{R}}_c^k{\mathbf{H}}_B^k} \right)}}{{\sum\limits_{j \ne k} {{\text{Tr}}\left( {{\mathbf{R}}_c^j{\mathbf{H}}_B^k} \right) + \sigma _{\text{n}}^2} }}} \right)}  + \sum\limits_{k = 1}^K {{\text{Tr}}\left( {{\mathbf{R}}_c^k{\mathbf{\Gamma }}_c^k} \right)}  + \kappa \left[ {{P_S} - {\text{Tr}}\left( {{{\mathbf{R}}_X}} \right)} \right] \hfill \\ \vspace{-2mm}
			\left( {1 - w} \right)\delta  + \sum\limits_{{l_c} \in \mathcal{C}} {{\mu _{{l_c}}}\left[ {{\text{Tr}}\left( {{{\mathbf{R}}_X}{\mathbf{A}}_{{N_t}}^{{l_c}}} \right) - \delta } \right]}  + \sum\limits_{{l_t} \in \mathcal{T}} {{\lambda _{{l_t}}}\left[ {\left( {1 + {P_{{\text{SCNR}}}}} \right){\text{Tr}}\left( {{{\mathbf{R}}_X}{\mathbf{B}}_{{N_t}}^{{l_t}}} \right) - {P_{{\text{SCNR}}}}\sum\limits_{l \in \mathcal{T}} {{\text{Tr}}\left( {{{\mathbf{R}}_X}{\mathbf{B}}_{{N_t}}^l} \right)} } \right]},  \hfill \\ 
		\end{gathered} 
		\label{LagFun}
	\end{equation}
	\hrulefill
	\vspace{-3mm}
\end{figure*}

Next, we prove that rank-1 communication covariance matrices exist when the radar signal covariance matrix is zero. In this case, the Lagrangian function is provided in (\ref{LagFun}), where $\mu_{l_c}$, $\lambda_{{l_t}}$, $\varsigma$, and $\mathbf{\Gamma}_c^k$ are Lagrange multipliers, and ${\mathbf{A}}_{{N_t}}^{{l_c}} $ and ${\mathbf{B}}_{{N_t}}^l $ are expressed as

\begin{equation}
{\mathbf{A}}_{{N_t}}^{{l_c}} = {\left( {\sqrt {{\rho _0}} /r_R^{{l_c}}} \right)^2}{\mathbf{a}}_{{N_t}}^{}\left( {\theta _R^{{l_c}},r_R^{{l_c}}} \right){\mathbf{a}}_{{N_t}}^{\text{H}}\left( {\theta _R^{{l_c}},r_R^{{l_c}}} \right),
\end{equation}
\begin{equation}
{\mathbf{B}}_{{N_t}}^l = {\left| {\alpha _A^l} \right|^2}{\mathbf{a}}_{{N_t}}^{}\left( {\theta _R^l,r_R^l} \right){\mathbf{a}}_{{N_t}}^{\text{H}}\left( {\theta _R^l,r_R^l} \right).
\end{equation}
Then, part of the Karush-Kuhn-Tucker (KKT) conditions are listed as
\begin{equation}
\left\{ \begin{gathered}
	{\mathbf{\Gamma }}_c^{k\;{\kern 1pt}  \star } = \varsigma {\mathbf{I}} + {{\mathbf{T}}_1} - {{\mathbf{T}}_2}, \hfill \\
	{\mathbf{R}}_c^{k\;{\kern 1pt}  \star } \succeq {\mathbf{0}},{\mathbf{\Gamma }}_c^{k\;{\kern 1pt}  \star } \succeq {\mathbf{0}},{\mu _{{l_c}}} \geqslant 0,{\lambda _{{l_t}}} \geqslant 0,\varsigma  \geqslant 0, \hfill \\
	{\text{Tr}}\left( {{\mathbf{\Gamma }}_c^{k\;{\kern 1pt}  \star }{\mathbf{R}}_c^{k\;{\kern 1pt}  \star }} \right) = 0, \hfill \\ 
\end{gathered}  \right.
\end{equation}
where $\mathbf{T}_1$ and $\mathbf{T}_2$ are expressed as
\begin{equation}
\begin{gathered}
	{{\mathbf{T}}_1} = \frac{w}{{\ln 2}} \cdot \sum\limits_{j \ne k} {\frac{1}{{\sum\limits_{i \ne j,k} {{\text{Tr}}\left( {{\mathbf{R}}_c^i{\mathbf{H}}_B^k} \right)}  + {\text{Tr}}\left( {{\mathbf{R}}_c^{k\;{\kern 1pt}  \star }{\mathbf{H}}_B^k} \right) + \sigma _{\text{n}}^2}}}  \times  \hfill \\
	\frac{{{\text{Tr}}\left( {{\mathbf{R}}_c^j{\mathbf{H}}_B^j} \right)}}{{\sum\limits_{i \ne k} {{\text{Tr}}\left( {{\mathbf{R}}_c^i{\mathbf{H}}_B^k} \right)}  + {\text{Tr}}\left( {{\mathbf{R}}_c^{k\;{\kern 1pt}  \star }{\mathbf{H}}_B^k} \right) + \sigma _{\text{n}}^2}} \cdot {\mathbf{H}}_B^j  -  \sum\limits_{{l_c} \in \mathcal{C}} {{\mu _{{l_c}}}{\mathbf{A}}_{{N_t}}^{{l_c}}} \hfill \\
	 - \sum\limits_{{l_t} \in \mathcal{T}} {{\lambda _{{l_t}}}\left[ {\left( {1 + {P_{{\text{SCNR}}}}} \right){\mathbf{B}}_{{N_t}}^{{l_t}} - {P_{{\text{SCNR}}}}\sum\limits_{l \in \mathcal{T}} {{\mathbf{B}}_{{N_t}}^l} } \right]},  \hfill \\ 
\end{gathered} ,
\end{equation}
\begin{equation}
{{\mathbf{T}}_2} = \frac{w}{{\ln 2}} \cdot \frac{{{\mathbf{H}}_B^k}}{{{\text{Tr}}\left( {{\mathbf{R}}_c^{k\;{\kern 1pt}  \star }{\mathbf{H}}_B^k} \right) + \sum\limits_{j \ne k} {{\text{Tr}}\left( {{\mathbf{R}}_c^j{\mathbf{H}}_B^k} \right)}  + \sigma _{\text{n}}^2}}.
\end{equation}
It is noted that $\mathbf{T}_2$ is a rank-1 matrix. Thus, we have the similar KKT conditions structures as those in \cite{Rank1Proof}. Using the Proof of Proposition 4.1 in \cite{Rank1Proof}, we can complete the proof.

\bibliographystyle{IEEEtran}      
\bibliography{IEEEabrv,references}

\begin{thebibliography}{10}
\providecommand{\url}[1]{#1}
\csname url@samestyle\endcsname
\providecommand{\newblock}{\relax}
\providecommand{\bibinfo}[2]{#2}
\providecommand{\BIBentrySTDinterwordspacing}{\spaceskip=0pt\relax}
\providecommand{\BIBentryALTinterwordstretchfactor}{4}
\providecommand{\BIBentryALTinterwordspacing}{\spaceskip=\fontdimen2\font plus
\BIBentryALTinterwordstretchfactor\fontdimen3\font minus
  \fontdimen4\font\relax}
\providecommand{\BIBforeignlanguage}[2]{{%
\expandafter\ifx\csname l@#1\endcsname\relax
\typeout{** WARNING: IEEEtran.bst: No hyphenation pattern has been}%
\typeout{** loaded for the language `#1'. Using the pattern for}%
\typeout{** the default language instead.}%
\else
\language=\csname l@#1\endcsname
\fi
#2}}
\providecommand{\BIBdecl}{\relax}
\BIBdecl

\bibitem{IMT2030}
I.~RECOMMENDATION, ``Framework and overall objectives of the future development
  of imt for 2030 and beyond,'' International Telecommunication Union (ITU)
  Recommendation (ITU-R), Tech. Rep., 2023.

\bibitem{ISCABack}
K.~Meng, Q.~Wu, W.~Chen, and D.~Li, ``Sensing-assisted communication in
  vehicular networks with intelligent surface,'' \emph{IEEE Trans. Veh.
  Technol.}, vol.~73, no.~1, pp. 876--893, Jan. 2024.

\bibitem{PLS}
Y.~Wu~\emph{et. al.}, ``A survey of physical layer security techniques for {5G}
  wireless networks and challenges ahead,'' \emph{IEEE J. Sel. Areas Commun.},
  vol.~36, no.~4, pp. 679--695, Apr. 2018.

\bibitem{PLSISACIRS}
\BIBentryALTinterwordspacing
Y.-K. Li and A.~Petropulu, ``A low-complexity design for irs-assisted secure
  dual-function radar-communication system,'' 2024. [Online]. Available:
  \url{https://arxiv.org/abs/2404.15327}
\BIBentrySTDinterwordspacing

\bibitem{ComSec1}
A.~Deligiannis, A.~Daniyan, S.~Lambotharan, and J.~A. Chambers, ``Secrecy rate
  optimizations for {MIMO} communication radar,'' \emph{IEEE Trans. Aerosp.
  Electron. Syst.}, vol.~54, no.~5, pp. 2481--2492, Mar. 2018.

\bibitem{ComSec2}
N.~Su, F.~Liu, and C.~Masouros, ``Secure radar-communication systems with
  malicious targets: Integrating radar, communications and jamming
  functionalities,'' \emph{IEEE Trans. Wireless Commun.}, vol.~20, no.~1, pp.
  83--95, Sep. 2021.

\bibitem{ComSec3}
N.~Su, F.~Liu, Z.~Wei, Y.-F. Liu, and C.~Masouros, ``Secure dual-functional
  radar-communication transmission: Exploiting interference for resilience
  against target eavesdropping,'' \emph{IEEE Trans. Wireless Commun.}, vol.~21,
  no.~9, pp. 7238--7252, Mar. 2022.

\bibitem{ComSec4}
N.~Su, F.~Liu, and C.~Masouros, ``Sensing-assisted eavesdropper estimation: An
  {ISAC} breakthrough in physical layer security,'' \emph{IEEE Trans. Wireless
  Commun.}, vol.~23, no.~4, pp. 3162--3174, Aug. 2024.

\bibitem{ComSec5}
D.~Xu, X.~Yu, D.~W.~K. Ng, A.~Schmeink, and R.~Schober, ``Robust and secure
  resource allocation for {ISAC} systems: A novel optimization framework for
  variable-length snapshots,'' \emph{IEEE Trans. Commun.}, vol.~70, no.~12, pp.
  8196--8214, Nov. 2022.

\bibitem{ANE}
H.~Niu, Y.~Xiao, X.~Lei, and M.~Xiao, ``Artificial noise elimination: From the
  perspective of eavesdroppers,'' \emph{IEEE Trans. Commun.}, vol.~70, no.~7,
  pp. 4745--4754, May 2022.

\bibitem{SenSec1}
A.~Dimas, B.~Li, M.~Clark, K.~Psounis, and A.~Petropulu, ``Spectrum sharing
  between radar and communication systems: Can the privacy of the radar be
  preserved?'' in \emph{51st Asilomar Conf. Signals, Syst., Comput.,}, Pacific
  Grove, CA, USA, Oct. 2017, pp. 1285--1289.

\bibitem{SenSec2}
A.~Dimas, M.~A. Clark, B.~Li, K.~Psounis, and A.~P. Petropulu, ``On radar
  privacy in shared spectrum scenarios,'' in \emph{IEEE Int. Conf. Acoust.,
  Speech Signal Process. (ICASSP)}, Brighton, UK, May 2019, pp. 7790--7794.

\bibitem{SenSec3}
A.~A. Hilli, A.~Petropulu, and K.~Psounis, ``{MIMO} radar privacy protection
  through gradient enforcement in shared spectrum scenarios,'' in \emph{IEEE
  Int. Symp. Dyn. Spectr. Access Netw. (DySPAN)}, Newark, NJ, USA, Nov. 2019,
  pp. 1--5.

\bibitem{SenSec4}
H.~N. Obeid and M.~Al-Ibadi, ``Interference mitigation and radar privacy
  protection under {MIMO} shared spectrum scenarios,'' in \emph{Al-Sadiq Int.
  Conf. Commun. Inf. Technol. (AICCIT)}, Al-Muthana, Iraq, Jul. 2023, pp.
  194--199.

\bibitem{SenSec5}
X.~Jiao, M.~Mehari, W.~Liu, M.~Aslam, and I.~Moerman, ``Openwifi {CSI} fuzzer
  for authorized sensing and covert channels,'' in \emph{14th ACM Conf.
  Security Privacy Wireless Mobile Netw.}, Jun. 2021, pp. 377--379.

\bibitem{SenSec6}
Y.~Yao, Y.~Li, and T.~Zhu, ``Interference-negligible privacy-preserved shield
  for {RF} sensing,'' \emph{IEEE Trans. Mobile Comput.}, vol.~23, no.~5, pp.
  3576--3588, May 2024.

\bibitem{SenSec7}
H.~Liu, X.~Liu, X.~Xie, X.~Tong, T.~Shi, and K.~Li, ``Application-oriented
  privacy filter for {mmWave} radar,'' \emph{IEEE Commun. Mag.}, vol.~61,
  no.~12, pp. 168--174, Oct. 2023.

\bibitem{SenSec8}
J.~J. Checa and S.~Tomasin, ``Location-privacy-preserving technique for {5G}
  mmwave devices,'' \emph{IEEE Commun. Lett.}, vol.~24, no.~12, pp. 2692--2695,
  2020.

\bibitem{SenSecMa1}
T.~Ma, Y.~Xiao, X.~Lei, and M.~Xiao, ``Sensing - resistance-oriented
  beamforming for privacy protection from {ISAC} devices,'' in \emph{IEEE Int.
  Conf. Commun. Workshops (ICC Workshops)}, Denver, CO, USA, Jun. 2024, pp.
  1207--1212.

\bibitem{SenSecMa2}
T.~Ma, X.~Lei, H.~Niu, and C.~Yuen, ``Sensing-resistant jamming: A novel
  physical layer attack in {ISAC} networks,'' \emph{IEEE Wireless Commun.
  Lett.}, vol.~13, no.~11, pp. 3099--3103, Aug. 2024.

\bibitem{SenSec9}
Y.~Zhang, H.~Chen, and H.~Wymeersch, ``Privacy preservation in delay-based
  localization systems: Artificial noise or artificial multipath?''
  \emph{\rm{[Online] Available: https://arxiv.org/abs/2408.11290}}, 2024.

\bibitem{SenSec10}
J.~Zou, C.~Masouros, F.~Liu, and S.~Sun, ``Securing the sensing functionality
  in {ISAC} networks: An artificial noise design,'' \emph{IEEE Trans. Veh.
  Technol.}, vol.~73, no.~11, pp. 17\,800--17\,805, Jul. 2024.

\bibitem{SenSec11}
Z.~Ren, J.~Xu, L.~Qiu, and D.~Wing Kwan~Ng, ``Secure cell-free integrated
  sensing and communication in the presence of information and sensing
  eavesdroppers,'' \emph{IEEE J. Sel. Areas Commun.}, vol.~42, no.~11, pp.
  3217--3231, Jul. 2024.

\bibitem{SenSec12}
H.~Jia, R.~Zhu, A.~Sciarrone, and L.~Ma, ``Illegal sensing suppression for
  integrated sensing and communication system,'' \emph{IEEE Internet Things
  J.}, early access 2024.

\bibitem{Material}
D.~Shin, Y.~Urzhumov, Y.~Jung, G.~Kang, S.~Baek, M.~Choi, H.~Park, K.~Kim, and
  D.~R. Smith, ``Broadband electromagnetic cloaking with smart metamaterials,''
  \emph{Nature Commun.}, vol.~3, no.~1, p. 1213, Jan. 2012.

\bibitem{NFSurvey1}
Y.~Liu, Z.~Wang, J.~Xu, C.~Ouyang, X.~Mu, and R.~Schober, ``Near-field
  communications: A tutorial review,'' \emph{IEEE Open J. Commun. Soc.},
  vol.~4, pp. 1999--2049, Aug. 2023.

\bibitem{NFSurvey2}
J.~Cong, C.~You, J.~Li, L.~Chen, B.~Zheng, Y.~Liu, W.~Wu, Y.~Gong, S.~Jin, and
  R.~Zhang, ``Near-field integrated sensing and communication: Opportunities
  and challenges,'' \emph{IEEE Wireless Commun.}, vol.~31, no.~6, pp. 162--169,
  Sep 2024.

\bibitem{NFISAC}
Z.~Wang, X.~Mu, and Y.~Liu, ``Near-field integrated sensing and
  communications,'' \emph{IEEE Commun. Lett.}, vol.~27, no.~8, pp. 2048--2052,
  May 2023.

\bibitem{NFISAC2}
N.~Babu, A.~Kosasih, C.~Masouros, and E.~Björnson, ``Symbol-level precoding
  for near-field isac,'' \emph{IEEE Commun. Lett.}, vol.~28, no.~9, pp.
  2041--2045, Aug. 2024.

\bibitem{NFPLS}
J.~Chen, Y.~Xiao, K.~Liu, Y.~Zhong, X.~Lei, and M.~Xiao, ``Physical layer
  security for near-field communications via directional modulation,''
  \emph{IEEE Trans. Veh. Technol.}, vol.~73, no.~8, pp. 12\,242--12\,246, Mar.
  2024.

\bibitem{ISACRIS1}
J.~Chu, Z.~Lu, R.~Liu, M.~Li, and Q.~Liu, ``Joint beamforming and reflection
  design for secure ris-isac systems,'' \emph{IEEE Trans. Veh. Technol.},
  vol.~73, no.~3, pp. 4471--4475, Oct. 2024.

\bibitem{TarDec}
L.~Xu, J.~Li, and P.~Stoica, ``Target detection and parameter estimation for
  {MIMO} radar systems,'' \emph{IEEE Trans. Aerosp. Electron. Syst.}, vol.~44,
  no.~3, pp. 927--939, 2008.

\bibitem{SI}
M.~Bayraktar, N.~González-Prelcic, and H.~Chen, ``Hybrid precoding and
  combining for mmwave full-duplex joint radar and communication systems under
  self-interference,'' in \emph{IEEE Int. Conf. Commun. Workshops (ICC
  Workshops)}, Denver, CO, USA, Jun. 2024, pp. 915--920.

\bibitem{InterCancel}
Z.~Behdad~\emph{et. al.}, ``Multi-static target detection and power allocation
  for integrated sensing and communication in cell-free massive {MIMO},''
  \emph{IEEE Trans. Wireless Commun.}, vol.~23, no.~9, pp. 11\,580--11\,596,
  Apr. 2024.

\bibitem{DoF1}
X.~Liu, T.~Huang, N.~Shlezinger, Y.~Liu, J.~Zhou, and Y.~C. Eldar, ``Joint
  transmit beamforming for multiuser mimo communications and mimo radar,''
  \emph{IEEE Trans. Signal Process.}, vol.~68, pp. 3929--3944, Jun. 2020.

\bibitem{DoF2}
F.~Liu, Y.-F. Liu, A.~Li, C.~Masouros, and Y.~C. Eldar, ``Cramér-rao bound
  optimization for joint radar-communication beamforming,'' \emph{IEEE Trans.
  Signal Process.}, vol.~70, pp. 240--253, Dec. 2022.

\bibitem{MDLAIC}
M.~Wax and T.~Kailath, ``Detection of signals by information theoretic
  criteria,'' \emph{IEEE Trans. Acoust., Speech, Signal Process.}, vol.~33,
  no.~2, pp. 387--392, Apr. 1985.

\bibitem{SS}
T.-J. Shan, M.~Wax, and T.~Kailath, ``On spatial smoothing for
  direction-of-arrival estimation of coherent signals,'' \emph{IEEE Trans.
  Acoust., Speech, Signal Process.}, vol.~33, no.~4, pp. 806--811, Aug. 1985.

\bibitem{FP}
K.~Shen and W.~Yu, ``Fractional programming for communication systems—part
  {I}: Power control and beamforming,'' \emph{IEEE Trans Signal Process.},
  vol.~66, no.~10, pp. 2616--2630, Mar. 2018.

\bibitem{Rank1Proof}
L.~Liu, R.~Zhang, and K.-C. Chua, ``Secrecy wireless information and power
  transfer with {MISO} beamforming,'' \emph{IEEE Trans. Signal Process.},
  vol.~62, no.~7, pp. 1850--1863, Apr. 2014.

\bibitem{SDPRank}
Y.~Huang and D.~P. Palomar, ``Rank-constrained separable semidefinite
  programming with applications to optimal beamforming,'' \emph{IEEE Trans. on
  Signal Process.}, vol.~58, no.~2, pp. 664--678, Sep. 2010.

\bibitem{ZeroRadar}
M.~Ashraf, B.~Tan, D.~Moltchanov, J.~S. Thompson, and M.~Valkama, ``Joint
  optimization of radar and communications performance in {6G} cellular
  systems,'' \emph{IEEE Trans. Green Commun. Netw.}, vol.~7, no.~1, pp.
  522--536, Jan. 2023.

\bibitem{HBF}
O.~E. Ayach, S.~Rajagopal, S.~Abu-Surra, Z.~Pi, and R.~W. Heath, ``Spatially
  sparse precoding in millimeter wave {MIMO} systems,'' \emph{IEEE Trans.
  Wireless Commun.}, vol.~13, no.~3, pp. 1499--1513, Jan. 2014.

\bibitem{MO}
X.~Yu, J.-C. Shen, J.~Zhang, and K.~B. Letaief, ``Alternating minimization
  algorithms for hybrid precoding in millimeter wave {MIMO} systems,''
  \emph{IEEE J. Sel. Top. Signal Process.}, vol.~10, no.~3, pp. 485--500, Feb.
  2016.

\bibitem{SDR}
Z.-q. Luo, W.-k. Ma, A.~M.-c. So, Y.~Ye, and S.~Zhang, ``Semidefinite
  relaxation of quadratic optimization problems,'' \emph{IEEE Signal Process.
  Mag.}, vol.~27, no.~3, pp. 20--34, May 2010.

\bibitem{Capon}
J.~Capon, ``High-resolution frequency-wavenumber spectrum analysis,''
  \emph{Proc. IEEE}, vol.~57, no.~8, pp. 1408--1418, Aug. 1969.

\bibitem{BMUSIC}
Z.~Guo, X.~Wang, and W.~Heng, ``Millimeter-wave channel estimation based on 2-d
  beamspace music method,'' \emph{IEEE Trans. Wireless Commun.}, vol.~16,
  no.~8, pp. 5384--5394, Aug. 2017.

\bibitem{RelaxMLE}
X.~Shang and Y.~Cheng, ``Code optimization and angle-doppler imaging for
  {ST-CDM LFMCW MIMO} radar systems,'' \emph{IEEE Signal Process. Lett.},
  vol.~29, pp. 2043--2047, Sep. 2022.

\bibitem{KLD1}
M.~Al-Jarrah, E.~Alsusa, and C.~Masouros, ``A unified performance framework for
  integrated sensing-communications based on {KL}-divergence,'' \emph{IEEE
  Trans. Wireless Commun.}, vol.~22, no.~12, pp. 9390--9411, May 2023.

\bibitem{KLD2}
B.~Tang, M.~M. Naghsh, and J.~Tang, ``Relative entropy-based waveform design
  for mimo radar detection in the presence of clutter and interference,''
  \emph{IEEE Trans. Signal Process.}, vol.~63, no.~14, pp. 3783--3796, Apr.
  2015.

\bibitem{CRB}
S.~M. Kay, \emph{Fundamentals of Statistical Signal Processing: Estimation
  Theory}.\hskip 1em plus 0.5em minus 0.4em\relax Englewood Cliffs, NJ, USA:
  Prentice Hall, 1993.

\bibitem{FofWirel}
D.~Tse and P.~Viswanath, \emph{Fundamentals of wireless communication}.\hskip
  1em plus 0.5em minus 0.4em\relax Cambridge university press, 2005.

\bibitem{NFSS}
K.~Qu, S.~Guo, J.~Ye, and H.~Zhao, ``Two-stage beamspace {MUSIC}-based
  near-field channel estimation for hybrid { XL-MIMO},'' \emph{IEEE Commun.
  Lett.}, vol.~28, no.~8, pp. 1949--1953, Aug. 2024.

\bibitem{NFChanEst}
M.~Cui and L.~Dai, ``Channel estimation for extremely large-scale {MIMO}:
  Far-field or near-field?'' \emph{IEEE Trans. Commun.}, vol.~70, no.~4, pp.
  2663--2677, Jan. 2022.

\end{thebibliography}

\end{document}